\numberwithin{equation}{section} \pagestyle{plain}
\newtheorem{theorem}{Theorem}[section]
\newtheorem{corollary}{Corollary}[section]
\newtheorem{proposition}{Proposition}[section]
\newtheorem{definition}{Definition}[section]
\newtheorem{remark}{Remark}
\def\XX{\boldsymbol{X}}
\def\rr{\boldsymbol{r}}
\def\AA{\boldsymbol{A}}
\def\RRR{\boldsymbol{R}}
\def\TT{\boldsymbol{\Theta}}
\def\II{\boldsymbol{I}}
\def\EEE{\mathcal{E}_d}
\def\SSS{\mathcal{S}_d}
\def\YY{\boldsymbol{Y}}
\def\pp{\boldsymbol{p}}
\def\RR{\mathbb{R}}
\def\zz{\boldsymbol{z}}
\def\ff{\boldsymbol{f}}
\def\Rjj{R_{(j_1,j_2)}}
\def\rjj{\rr_{(j_1,j_2)}}
\def\design{\mathcal{X}_d}
\def\DDD{\mathcal{D}_d}
\def\VaR{\text{VaR}_{\alpha}}
\def\ES{\text{ES}_{\alpha}}
\DeclareMathOperator\expval{E}
\DeclareMathOperator{\rank}{rank}
\begin{document}

\author{ R.FONTANA \\ \textit{\ 
Department of Mathematical Sciences G. Lagrange,} \\ {Politecnico di
Torino.} \\ E. LUCIANO \thanks{%
Elisa Luciano gratefully acknowledges financial support from the Italian
Ministry of Education, University and Research (MIUR), "Dipartimenti di
Eccellenza" grant 2018-2022.}\\ \textit{\ 
ESOMAS\ Department and Collegio Carlo Alberto, }Universit\`{a} di Torino%
\\ P. SEMERARO\thanks{%
Roberto Fontana and Patrizia Semeraro gratefully acknowledge financial support from the Italian
Ministry of Education, University and Research (MIUR), "Dipartimenti di
Eccellenza" grant 2018-2022.}  \\ \textit{\ 
Department of Mathematical Sciences G. Lagrange,} \\ {Politecnico di
Torino.}}
\title{Model Risk in Credit Risk}
\maketitle

\begin{abstract}

The issue of model risk in default modeling has been known since inception
of the Academic literature in the field.  However, a rigorous treatment requires a description of all
the possible models, and a measure of the distance between a single model
and the alternatives, consistent with the applications. This is the purpose
of the current paper. We first analytically describe all possible joint models for
default, in the class of finite sequences of exchangeable Bernoulli random variables. We then
measure how the model risk of choosing or calibrating one of them affects
the portfolio loss from default, using two popular and economically
sensible metrics, Value-at-Risk (VaR) and Expected Shortfall (ES).

\emph{keywords}: Exchangeable Bernoulli distribution; risk measures; model risk.
\end{abstract}


\section{Introduction}

Models for default risk are prone to so-called model risk, in two senses:
adopting the wrong model for the occurrence of default and calibrating or
estimating a given model in a wrong way. The occurrence of model risk in the
first sense is inherent in default, because of the difficulty of describing
the causes of default or even of enumerating the determinants. Even the
occurrence of calibration or estimation risk is overwhelming, because of the
scarcity of observations, especially when looking at the joint default of
specific obligors or particular categories of obligors, and lack of data to
estimate parameters such as the correlation of defaults. The issue of model
risk is indeed particularly strong in joint defaults, because on top of the
model risk for marginal defaults there is model risk also in their joint
distribution. We focus on joint modeling.

The issue of model risk in default modeling has been known since inception
of the Academic literature in the field. Professionals are well aware of its
importance too. However, a rigorous treatment requires a description of all
the possible models and a measure of the distance between a single model
and the alternatives, consistent with the applications. This is the purpose
of the current paper. We first describe all possible joint models for
default, in the class of exchangeable Bernoulli random variables. We then
measure how the model risk of choosing or calibrating one of them affects
the portfolio loss from default, using two popular and economically
sensible metrics, Value-at-Risk (VaR) and Expected Shortfall (ES).

Univariate models of default belong to two families: structural and
reduced-form models. The structural models, initiated by \cite{merton1974pricing},
reconduct default to the fact that the so-called asset value of a firm goes
below a given monetary threshold. Reduced-form models, whose seminal work is
due to \cite{jarrow1992pricing}, estimate from interest rates on
defaultable debt the intensity of default, which is then interpreted as a
fixed parameter or a stochastic process itself. For a survey of the
approaches see for instance \cite{bielecki2009credit}.
Multivariate models either make use of a copula to aggregate univariate
default probabilities (see for instance \cite{cherubini2004copula}, or \cite{duffie2003credit}, or use a Bernoulli mixture model
(see chapter 8 in \cite{mcneil2005quantitative}).

The difficulties in choosing the right model for univariate modeling and
calibrating it have been shown to be considerable. For structural models,
the asset value is unobservable. For reduced-form models, rates of return on
bonds are thought to include also a liquidity spread, which is difficult to
separate from the default spread.

The difficulties in choosing or calibrating a multivariate model are even
bigger (see the early recognition in \cite{embrechtscorrelation}). Structural models can be calibrated, provided the correlation
matrix of asset values can be. Multivariate reduced-form models are usually
calibrated using the corresponding structural dependence (see chapter 10 in
\cite{duffie2003credit}).

The previous literature which assesses model risk in joint default usually
takes as given the marginal probabilities of default, as we do:\ marginal
default indicators are Bernoulli variables. It tries to explore the range of
joint default probabilities, or the possible distribution of the loss from
credit risk, which is the weighted sum of the marginal Bernoulli variables,
where the weights are the exposures of the creditor towards different
obligors. To do that, the literature uses different copulas (see\cite{embrechts2003using}). Here we use the fact that all joint distributions
or distributions of sums are generated starting from a finite number of
so-called ray densities. Differently from copulas, all the rays can be
found, either numerically or analytically.

\cite{fontana2018representation} developed a simple method to represent all
the Bernoulli variables with some specified moments, as a convex hull of
densities belonging to the same class, the ray densities. They provide an
algorithm to find the extreme rays of a given class without restrictions
either on the number of variables or on the specified moments. The only
drawback of the method is the amount of computational effort required for
the numerical solution. The main contribution of the current paper consists
in finding analytically the convex hull generators for the class of exchangeable Bernoulli variables with given mean and for the class  of exchangeable Bernoulli variables with given mean and correlation. The analytical solution allows us to work  in any dimension.

Once the multivariate Bernoulli variables represent the default indicators
of a portfolio of obligors, the  ray densities, that we can find analytically,  allow
us to describe all the joint distributions of defaults, even for large
portfolios, and/or the possible distributions of the loss. There is a third
mathematical contribution that helps in doing that:  we show that the VaR  bounds are reached on ray densities and we find an analytical expression for them. We also explicitly found bounds for the ES.
We then measure the consequence of using a specific model (which might be
"wrong" one) or calibrating it in the "wrong" way looking at the range of
the possible VaR and ES.

So, the paper is novel both for the Mathematical contribution, namely the
analytical description of the ray densities in high dimensions, and for the
Mathematical Finance one, namely measurement of model risk using all
possible multivariate distributions,  obtained as linear convex combinations of generators that can be analytically found. This analytical solution allows us to find analogical bounds to measure model risk.

The paper unfolds as follows: Section \ref{not} introduces the mathematical framework. Section \ref{EG} introduces the notion and properties of rays
for exchangeable Bernoulli variables. Section \ref{FM} introduce the risk measures and provide analytical bounds for exchangeable Bernoulli variables.  Model risk is discussed in Section %
\ref{MR}. Section \ref{ex} provides calibrated examples. Section \ref{fine}
concludes.
\section{\protect\bigskip Default indicators: mathematical background\label{not}}

We consider a credit portfolio $P$ with $d$ obligors.

Some notation is needed.
Let the random variable $\boldsymbol{X}=(X_1, \ldots, X_{d})$ be the default
indicators for the portfolio $P$ and let us assume that the indicator $%
\boldsymbol{X}$ is exchangeable, i.e. $\boldsymbol{X}\in \mathcal{E}_d$, where
$\mathcal{E}_d$ is the class of $d$-dimensional exchangeable Bernoulli
distributions. Let $\mathcal{E}_d(p)$ be the class of exchangeable Bernoulli
distributions with the same Bernoulli marginal distributions $B(p)$, where $%
p $ is the marginal default probability of each obligor. If $\boldsymbol{X}%
=(X_1, \dots, X_d)$ is a random vector with joint distribution in $\mathcal{E%
}(p)$, we denote

\begin{itemize}
\item its cumulative distribution function by $F_{p}$ and its probability  mass function (pmf)
by $f_{p}$;
\item the column vector which contains the values of  $f_{p}$
over $\design:=\{0, 1\}^d$, by $ (f_{p}(\boldsymbol{x}):\boldsymbol{x}\in\design)$
respectively; we make the non-restrictive hypothesis that the set $\design$ of $2^d$ binary vectors is ordered according to the
reverse-lexicographical criterion. For example $\mathcal{X}_2=\{00, 10, 01,
11\}$ and $\mathcal{X}_3=\{000, 100, 010, 110, 001, 101, 011, 111\}$;

\item we denote by $\mathcal{P}_d$ the set of permutations on $\{1,\ldots, d\}$;



\end{itemize}

Recall that the expected value of $X_{i}$ is $p$, $\expval[X_i]=p,\,\,\, i=1,\ldots, d$. We denote $q=1-p$.
%
We assume that vectors are column vectors. 

\subsection{Exchangeable Bernoulli variables}

Let us consider a pmf $f_p$ of a $d$-dimensional Bernoulli distribution with mean $p$.
Since $f_{p}(\boldsymbol{x})=f_{p}(\sigma (\boldsymbol{x}))$ for any $%
\sigma \in \mathcal{P}_d$, any mass function $f_p$  in $\EEE(p)$ is given by $f_{i}:=f_{p}(\boldsymbol{x})$ if $%
\boldsymbol{x}=(x_{1},\ldots ,x_{d})\in \design$ and $%
\#\{x_{j}:x_{j}=1\}=i$.
Therefore  we identify a mass function $f_p$ in $\EEE(p)$ with the corresponding vector $\ff_p:=(f_0, \ldots, f_d)$. Furthermore,  the moments depend only on their order, we therefore use $%
\mu_{{\alpha}}$ to denote a moment of order $\alpha=\text{ord}(%
\boldsymbol{\alpha})=\sum_{i=1}^d\alpha_i$, where $\boldsymbol{\alpha}\in \design$.

We also observe that the
correlation $\rho$ between two Bernoulli variables $X_i \sim B(p)$ and $X_j
\sim B(p)$ is related to the second-order moment $\mu_{2}=
\expval[X_i
X_j]$ as follows
\begin{equation}  \label{eq:rho_e12}
\mu_2=\rho pq+p^2.
\end{equation}

\subsection{Joint defaults, loss distribution and risk measures}

To model the loss of a credit risk portfolio $P$ of $d$ obligors we consider
the sum of the individual losses
\begin{equation*}
L=\sum_{i=1}^{d}w_{i}X_{i},
\end{equation*}
where $w_{i}\in (0,1]$ and $\sum_{1=1}^{d}w_{i}=1$. In this paper we consider the case $w_{i}=\frac{1}{d},\,i\in
\{1,\ldots ,d\}$. The extension to unequal weights can be done numerically. For
equal weights, $L=\frac{S_{d}}{d}$, where
\begin{equation*}
S_{d}=\sum_{i=1}^{d}X_{i}
\end{equation*}%
represents the number of defaults. Therefore, the distribution of $S_{d}$
represents the distribution of the loss.  Since the vector of default indicators $\XX$ is assumed to be exchangeable, there is a one-to-one  correspondence between the distribution of the number of defaults and  the joint distribution of $\XX$.
In fact, as said in the preliminaries,
 since $f_{p}(\boldsymbol{x})=f_{p}(\sigma (\boldsymbol{x}))$ for any $%
\sigma \in \mathcal{P}_d$, any mass function $f_p$  in $\EEE(p)$ is given by $f_{i}:=f_{p}(\boldsymbol{x})$ if $%
\boldsymbol{x}=(x_{1},\ldots ,x_{d})\in \mathcal{D}_{d}$ and $%
\#\{x_{j}:x_{j}=1\}=i$. We can define
a one-to-one correspondence between $\EEE(p)$ and the class of the distributions on the number of defaults.

 Let $\SSS(p)$ be the class of distributions $p_S$ on $\{0,\ldots, d\}$ such that $S_d=\sum_{i=0}^dX_i$ with  $\XX\in \EEE(p)$. Let $p_S(j)=p_j=P(S_d=j)$ and $\pp_S=(p_0,\ldots, p_d)$.

The map:
\begin{equation}\label{map}
\begin{split}
E: \EEE(p)&\rightarrow \mathcal{S}_d(p)\\
f_{j} &\rightarrow p_j={\binom{d}{j}}f_j.
\end{split}
\end{equation}
is a one-to-one correspondence between $ \EEE(p)$ and $\SSS(p)$.
%
Therefore we have
\begin{equation}\label{map}
\begin{split}
\EEE(p)&\leftrightarrow \mathcal{S}_d(p)
\end{split}
\end{equation}
 We now prove that the class of distributions $\SSS(p)$ coincides with the entire class of discrete distributions  with mean $dp$, say $\DDD(dp)$. This fact is useful to simplify the search of the generators of $\EEE(p)$. The class $\DDD(dp)$ is not of special interest in this context, but it is introduced for technical reasons.
\begin{proposition}\label{d-eq-s}
It holds $\SSS(p)=\DDD(dp)$.
\end{proposition}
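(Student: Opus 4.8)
The plan is to establish the two inclusions $\SSS(p)\subseteq\DDD(dp)$ and $\DDD(dp)\subseteq\SSS(p)$ separately, using the bijection $E$ introduced just above (the map $f_j\mapsto p_j=\binom{d}{j}f_j$) as the backbone of the argument. The forward inclusion is essentially a bookkeeping step, while the reverse inclusion is where the content lies: it amounts to inverting $E$ and checking that the reconstructed exchangeable pmf really has Bernoulli marginals $B(p)$.

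For $\SSS(p)\subseteq\DDD(dp)$ I would take an arbitrary $\XX\in\EEE(p)$ and let $\pp_S$ be the law of $S_d=\sum_{i=1}^d X_i$. Since each $X_i\in\{0,1\}$, the variable $S_d$ is supported on $\{0,\ldots,d\}$, so $\pp_S$ is a discrete distribution on that set. By linearity of expectation and the fact that every marginal is $B(p)$, one has $\expval[S_d]=\sum_{i=1}^d\expval[X_i]=dp$, so $\pp_S$ has mean $dp$ and therefore belongs to $\DDD(dp)$.

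For the reverse inclusion $\DDD(dp)\subseteq\SSS(p)$ I would construct an explicit preimage under $E$. Given $\pp_S=(p_0,\ldots,p_d)\in\DDD(dp)$, set $f_j:=p_j/\binom{d}{j}$ for $j=0,\ldots,d$ and let $f_p$ be the exchangeable pmf on $\design$ that assigns the value $f_j$ to every binary vector with exactly $j$ ones. Nonnegativity of $f_p$ is immediate from $p_j\ge 0$; normalization follows from $\sum_{\xx\in\design}f_p(\xx)=\sum_{j=0}^d\binom{d}{j}f_j=\sum_{j=0}^d p_j=1$; exchangeability holds by construction.

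The one point requiring care, and the crux of the proof, is verifying that the marginals of $f_p$ are exactly $B(p)$ rather than some other Bernoulli law. By exchangeability all univariate marginals coincide, say $\expval[X_1]=\cdots=\expval[X_d]=m$, so that $\expval[S_d]=dm$. On the other hand, since $\pp_S$ has mean $dp$ we also have $\expval[S_d]=\sum_{j=0}^d j\,p_j=dp$. Comparing the two expressions forces $m=p$, hence $f_p\in\EEE(p)$, and by construction $E(f_p)=\pp_S$, so $\pp_S\in\SSS(p)$. Combining this with the forward inclusion yields $\SSS(p)=\DDD(dp)$, as claimed.
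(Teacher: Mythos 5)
Your proof is correct and follows essentially the same route as the paper's: the forward inclusion is dispatched as immediate, and the reverse inclusion is proved by the identical construction, setting $f_j=p_j/\binom{d}{j}$ and spreading this mass uniformly over the binary vectors with exactly $j$ ones. The one place where you genuinely diverge is the verification that the marginals of the constructed pmf are $B(p)$: the paper computes $P(X_1=1)$ directly by counting, $\sum_{i=1}^d \binom{d-1}{i-1}\,p_i/\binom{d}{i}=\sum_{i=1}^d \tfrac{i}{d}\,p_i=p$, whereas you deduce it from exchangeability plus linearity of expectation (all marginals equal some common value $m$, so $dm=\expval[S_d]=dp$, forcing $m=p$). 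Your variant avoids the binomial-coefficient algebra entirely and is arguably cleaner; note only that it silently relies on the same counting fact the paper uses, namely that under the constructed pmf $P(S_d=j)=\binom{d}{j}f_j=p_j$ (i.e. that the law of $S_d$ really is $\pp_S$), which you invoke, correctly but only at the end, as ``by construction $E(f_p)=\pp_S$''; stating that identity before using $\expval[S_d]=\sum_j j\,p_j$ would make the logical order airtight. You also check normalization of $f_p$ explicitly, which the paper omits; that is a small but genuine tightening.
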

\begin{proof}
\begin{description}
\item{1)} $\SSS(p)\subseteq \DDD(dp)$. This is trivial.

\item{2)} $\DDD(dp)\subseteq \SSS(p)$. Let $\{p_0,\ldots, p_d\}\in \DDD(dp)$. Let us define $f_i=\frac{p_i}{\binom{d}{j}}$ and $p(x_1,\ldots, x_d)=f_i$ for all $(x_1,\ldots, x_d)$ such that $\sum_{j=0}^dx_j=i$. The mass function $p$ is the mass function of a $d$-dimensional Bernoulli random vector, which is exchangeabe by contruction. We have
\begin{equation}
\begin{split}
E[X_1]&=P(X_1=1)=\sum_{(x_1,\ldots,x_d):x_1=1}p(x_1,\ldots,x_d)=\sum_{i=1}^d\sum_{\substack{(x_1,\ldots,x_d):x_1=1,\\ \sum_{i=0}^dx_i=1}}p(x_1,\ldots,x_d)\\
&=\sum_{i=1}^d\sum_{\substack{(x_1,\ldots,x_d):x_1=1,\\ \sum_{i=0}^dx_i=1}}f_i=\sum_{i=1}^d\binom{d-1}{i-1}\frac{p_i}{\binom{d}{i}}\\
&=\sum_{i=1}^d\frac{(d-1)!}{(i-1)!(d-1-i+1)!}\frac{i!(d-i)!}{d!}p_i\\
&=\sum_{i=1}^d\frac{i}{d}p_i=\frac{1}{d}pd=p.
\end{split}
\end{equation}
Then $\XX\in \EEE(p)$.
\end{description}
Now let $S_d:=\sum_{i=1}^dX_i$. We have $P(S_d=j)=\binom{d}{j}f_j=p_j$ and $\{p_0,\ldots,, p_d\}\in \SSS(p)$.

\end{proof}

Therefore the three classes $\EEE(p)$, $\mathcal{S}_d(p)$ and $\DDD(dp)$ are essentially the same class, i.e.
\begin{equation}\label{map}
\begin{split}
\EEE(p)&\leftrightarrow \mathcal{S}_d(p)\equiv\DDD(dp)
\end{split}
\end{equation}
Thanks to the above proposition to find the generators of $\SSS(p)$ we can look for the generators of $\DDD(dp)$. This simplifies the search. The generators we find are in one-to-one relationship with the generators of $\EEE(p)$.

\section{Exchangeable Bernoulli generators \label{EG}}

We build on the results in \cite{fontana2018representation}, where the authors represent the Fr\'echet class of multivariate $d$-dimensional Bernoulli distributions with given margins and/or pre-specified moments as the points of a convex hull. The generators of the convex hull are mass functions in the class and they can be explicitly found. The range of application of this method is limited only by the computational effort required since the number of generators increases very quickly as the dimension increases. We show here that under the condition of exchangeability this limit can be overtaken because we analytically find the ray densities. As a consequence the dimension is no longer an issue. We focus on two classes: the class $\EEE(p)$ and  the class $\mathcal{E}_d(p,\rho )$, i.e.  the class of exchangeable Bernoulli vectors with given $p$ and given correlation $\rho$. The  one to one correspondence $E$ between the distributions $%
\pp_{S}\in \mathcal{S}_{d}(p)$ and $\ff_{p}\in \mathcal{E}_d(p)$  is also  a one-to-one correspondence between the distributions $%
\pp_{S}\in \mathcal{S}_{d}(p, \rho)$ and $\ff_{p}\in \mathcal{E}_d(p, \rho)$.

In Section 3.1 we represent the class $\mathcal{E}_d(p)$ as a convex hull of
mass functions in the class, which we call ray densities, so that each mass
function is a convex combinations of ray densities belonging to $\mathcal{E}_d
(p)$. We analytically find the ray densities and their number, that depends on the dimension $d$  and the mean value $p$.  The one-to-one map between $\EEE(p)$ and $\SSS(p)$ and Proposition \ref{d-eq-s}  are crucial.

In Section 3.2 we represent the class $\mathcal{E}_d(p,\rho )$, as well as $%
\mathcal{S}_{d}(p,\rho ),$ as a convex hull of ray densities.
We analyticall find them using the one-to-one correspondence between the
class $\mathcal{E}_d(p)$ and the class $\mathcal{S}_{d}(p)$ and between  the relative  subclasses $\mathcal{S}_{d}(p,\rho )$ and $\mathcal{E}_d(p,\rho )$.
 We prove that ray densities in $\mathcal{S}_{d}(p,\rho )$  have support on at most three points.
By so doing, also in this case the dimension $d$ is not an issue.

\subsection{For given marginal default probabilities}
Using the equivalence $\SSS(p)\equiv \DDD(pd)$ stated in  Proposition \ref{d-eq-s} a pmf in $\SSS(p)$  is a pmf on $\{0,\ldots,d\}$ with mean $pd$.
Thanks to the map $E$ in Equation \ref{map}  this is also equivalent to find a set of conditions that a pmf of a multivariate Bernoulli has to satisfy for being in $\EEE(p)$.  This fact is crucial in the following proposition.

\begin{proposition}\label{sd}
Let $\YY$ be a discrete random variable defined over $\{0,\ldots,d\}$ and let $p_Y$ be its pmf. Then
\begin{equation}
Y\in \SSS(p)\,\,\, \Longleftrightarrow\,\,\, \sum_{j=0}^d(j-pd)p_Y(j)=0.
\end{equation}

\end{proposition}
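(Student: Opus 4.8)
The plan is to reduce the statement to Proposition \ref{d-eq-s}. That proposition identifies $\SSS(p)$ with $\DDD(dp)$, the class of \emph{all} discrete distributions supported on $\{0,\ldots,d\}$ whose mean equals $dp$. Consequently, membership $Y\in\SSS(p)$ is equivalent to the single requirement that $p_Y$ be a pmf on $\{0,\ldots,d\}$ with $\expval[Y]=dp$. Since $Y$ is given to be a discrete random variable over $\{0,\ldots,d\}$, the only remaining content is the mean constraint, and the whole proof amounts to rewriting that constraint in the centred form displayed in the statement.

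First I would translate the proposed identity into a statement about the mean, exploiting that $p_Y$ is normalised, $\sum_{j=0}^d p_Y(j)=1$. A direct computation gives
\begin{equation}
\sum_{j=0}^d (j-pd)\,p_Y(j)=\sum_{j=0}^d j\,p_Y(j)-pd\sum_{j=0}^d p_Y(j)=\expval[Y]-pd .
\end{equation}
Hence the condition $\sum_{j=0}^d (j-pd)\,p_Y(j)=0$ holds if and only if $\expval[Y]=pd$. Chaining this with the previous paragraph yields both implications simultaneously: $Y\in\SSS(p)$ iff $Y\in\DDD(dp)$ (Proposition \ref{d-eq-s}) iff $\expval[Y]=pd$ iff $\sum_{j=0}^d(j-pd)\,p_Y(j)=0$.

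The argument is short precisely because all the substantive work is already contained in Proposition \ref{d-eq-s}. The step that genuinely relies on that earlier result is the inclusion $\DDD(dp)\subseteq\SSS(p)$: knowing only that $\expval[Y]=pd$ is not, on its face, enough to realise $Y$ as the law of $\sum_{i=1}^d X_i$ for some exchangeable Bernoulli vector with mean $p$, and it is exactly the explicit construction $f_i=p_i/\binom{d}{i}$ from the proof of Proposition \ref{d-eq-s} that supplies this. I therefore do not expect any real obstacle here; the only subtlety internal to the present proof is that the normalisation $\sum_{j=0}^d p_Y(j)=1$ must be invoked to pass between the raw first moment $\sum_{j} j\,p_Y(j)$ and the centred sum, which is automatic once $p_Y$ is assumed to be a genuine pmf.
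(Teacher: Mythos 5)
Your proof is correct and follows essentially the same route as the paper's own: both reduce the statement to Proposition \ref{d-eq-s} (so that membership in $\SSS(p)$ becomes the mean constraint $\expval[Y]=pd$) and then rewrite that constraint in the centred form $\sum_{j=0}^d(j-pd)p_Y(j)=0$. Your explicit remark that the normalisation $\sum_{j=0}^d p_Y(j)=1$ is what justifies this rewriting is a small point the paper leaves implicit, but it is the same argument.
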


\begin{proof}
Let $\YY$ be a discrete random variable defined over $\{0,\ldots,d\}$. By Proposition \ref{d-eq-s} $Y\in \SSS(p)$ iff $E[Y]=pd$.
It holds
\begin{equation*}
E[Y]=pd \Longleftrightarrow E[Y-pd]=0 \Longleftrightarrow  \sum_{j=0}^d(j-pd)p_Y(j)=0.
\end{equation*}
\end{proof}
Using Proposition \ref{sd} we can find all generators of $\SSS(p)$ that, thanks to the map $E$ is equivalent to find all the generators of $\EEE(p)$.

We have to find the solutions $\pp_S=(p_0,\ldots, p_j)$ of
\begin{equation}\label{simp}
 \sum_{j=0}^d(j-pd)p_j=0.
\end{equation}
with the conditions $p_j\geq0, \, j=0,\ldots,d$ and $\sum_{j=0}^dp_j=1$.   From the standard theory of linear equations we know that all the positive solutions of  \ref{simp} are elements of the convex cone  \begin{equation}\label{cone}
\mathcal{C}_p=\{\zz\in \RR^{d+1}:  \sum_{j=0}^da_jz_j=0,\, I\zz\geq 0\},
 \end{equation}
 where $a_j=j-pd$ and $I$ is the $(d+1)\times(d+1)$ identity matrix, and therefore  can be generated as convex combinations of a set of generators which are referred to as extremal rays of the linear system. The proof of the following proposition follows Lemma 2.3 in \cite{terzer2009large}.

\begin{proposition}\label{multinulli}
 Let us consider the linear system
\begin{equation}\label{system}
A\zz=0, \zz\in \RR^{d+1}
\end{equation}
where $A$ is a $m\times (d+1)$ matrix, $m\leq d$ and $\rank A=m$. The extremal rays of the system \ref{system} have at most $m+1$ non-zero components.
\end{proposition}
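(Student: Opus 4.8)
The plan is to characterize the extremal rays of the cone $\mathcal{C}=\{\zz\in\RR^{d+1}:A\zz=0,\ \zz\geq 0\}$ (the nonnegativity being inherited from the cone in \eqref{cone}) through the rank of the constraints that are \emph{active} at a candidate generator, and then to count how many such constraints can occur. Write $n=d+1$ for the ambient dimension. At a nonzero $\zz\in\mathcal{C}$ the active constraints are the $m$ equalities $A\zz=0$ (which are always tight) together with the sign constraints $z_j\geq 0$ for the coordinates $j$ with $z_j=0$; if $\zz$ has exactly $k$ nonzero components, there are $n-k$ tight sign constraints, encoded by the covectors $e_j^{\top}$ with $j\notin\operatorname{supp}(\zz)$. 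Let $M(\zz)$ denote the matrix stacking all these active constraints.

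The key structural fact I would record first (this is the content of Lemma~2.3 in \cite{terzer2009large}) is that a nonzero $\zz\in\mathcal{C}$ generates an extremal ray if and only if $\rank M(\zz)=n-1$. I would prove the only implication I actually need, namely that an extremal ray forces $\rank M(\zz)\geq n-1$, by contraposition. If $\rank M(\zz)\leq n-2$, then $\ker M(\zz)$ has dimension at least $2$ and contains $\zz$, so I can choose $\boldsymbol{v}\in\ker M(\zz)$ not parallel to $\zz$. Since the rows of $A$ and every $e_j^{\top}$ with $j\notin\operatorname{supp}(\zz)$ annihilate $\boldsymbol{v}$, the perturbed vectors $\zz\pm\varepsilon\boldsymbol{v}$ satisfy $A(\zz\pm\varepsilon\boldsymbol{v})=0$ and keep the vanishing coordinates at zero; for $\varepsilon$ small enough the strictly positive coordinates remain positive, so $\zz\pm\varepsilon\boldsymbol{v}\in\mathcal{C}$. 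Then $\zz$ is the midpoint of two points of $\mathcal{C}$ that do not both lie on the half-line $\RR_{\geq 0}\zz$, so $\zz$ is not extremal.

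With this in hand the counting step is immediate. The rows of $A$ contribute rank at most $m$ (in fact exactly $m$, since $\rank A=m$), while the tight sign constraints $e_j^{\top}$ are distinct standard covectors and contribute rank exactly $n-k$; hence $\rank M(\zz)\leq m+(n-k)$. For an extremal generator the left-hand side is at least $n-1$, so $n-1\leq m+n-k$, that is $k\leq m+1$. Thus an extremal ray has at most $m+1$ nonzero components, which is the assertion.

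The only genuinely delicate point is the rank characterization of extremal rays; the perturbation argument and the rank bound $m+(n-k)$ are elementary once the active set is pinned down correctly. The care is concentrated in treating the equalities $A\zz=0$ as permanently active and in verifying that nonnegativity survives small perturbations along $\ker M(\zz)$ — which is exactly guaranteed by the choice $\boldsymbol{v}\in\ker M(\zz)$, forcing $v_j=0$ on the zero pattern of $\zz$.
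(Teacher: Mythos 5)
Your proof is correct and follows essentially the same route as the paper: both rest on the active-set rank characterization of extremal rays from Lemma~2.3 of Terzer (rank of the tight constraints $[A;I^*]$ equal to $d=n-1$), followed by the identical counting argument $k\leq m+1$. The only difference is that you supply a self-contained perturbation proof of the one direction of that characterization which the paper merely cites, which strengthens rather than changes the argument.
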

\begin{proof}
 Let $\mathcal{C}_A=\{\zz\in \RR^{d+1}:  A\zz=0,\, I\zz\geq 0\}$ be the convex cone of all the positive solutions of \ref{system}.
 A solution $\rr$ of \ref{simp} is an extremal ray of $\mathcal{C}_A$ iff $I^*\zz=0$ for a submatrix $n_{I^*}\times(d+1)$, $I^*$ of $I$ and
\begin{equation}
\text{rank}\left[ \begin{split}
&A \\
 &I^*
\end{split}\right]=d.
\end{equation}

Therefore $\rank I^*\geq d-m$ and $\rr$ has at most $(d+1)-(d-m)=m+1$ non-zero components.
\end{proof}

\begin{corollary}\label{corp}
The extremal rays of the convex cone $\mathcal{C}_p$ in \ref{cone} have at most two non-zero components.
\end{corollary}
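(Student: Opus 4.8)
The plan is to recognize Corollary \ref{corp} as nothing more than the special case $m=1$ of Proposition \ref{multinulli}, so that the whole argument reduces to matching the cone $\mathcal{C}_p$ to the setting of that proposition and checking its hypotheses. First I would observe that the single defining constraint $\sum_{j=0}^d a_j z_j = 0$ of $\mathcal{C}_p$ is exactly a homogeneous linear system $A\zz = 0$ in which $A$ is the $1\times(d+1)$ row vector $A = (a_0, \ldots, a_d)$ with $a_j = j - pd$. In particular the integer $m$ in Proposition \ref{multinulli} equals $1$ here, and the cone $\mathcal{C}_A$ appearing there coincides with $\mathcal{C}_p$.

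The only hypotheses of Proposition \ref{multinulli} that then require verification are $m \leq d$ and $\rank A = m = 1$. The inequality $m \leq d$ holds for every $d \geq 1$. For the rank condition it suffices to exhibit one nonzero entry of $A$: taking $j = d$ gives $a_d = d - pd = d(1-p)$, which is strictly positive since $0 < p < 1$ and $d \geq 1$. Hence $A$ is a nonzero row vector, so $\rank A = 1$, and both hypotheses are met.

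With the hypotheses in place, Proposition \ref{multinulli} applies directly and yields that every extremal ray of $\mathcal{C}_p$ has at most $m+1 = 2$ non-zero components, which is precisely the assertion of the corollary.

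I would not expect any real obstacle in this argument: the substantive content is entirely carried by Proposition \ref{multinulli}, and the single point demanding any attention is the essentially trivial remark that the coefficient vector $(a_0, \ldots, a_d)$ is not the zero vector. Once that is noted, the rank is exactly one, the bound $m+1$ evaluates to $2$, and the proof is complete.
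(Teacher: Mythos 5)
Your proof is correct and follows exactly the paper's own argument: the paper likewise identifies the single constraint of $\mathcal{C}_p$ with a $1\times(d+1)$ row matrix $A=[a_0,\ldots,a_d]$, notes $\rank A = m = 1$, and invokes Proposition \ref{multinulli} to bound the number of non-zero components by $m+1=2$. Your explicit check that $A$ is not the zero vector (via $a_d = d(1-p)>0$) is a small detail the paper leaves implicit, but it does not change the route.
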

\begin{proof}
Let $a_j=j-pd$, $j=0,\ldots, d$. The matrix  $A=[a_0,\ldots, a_d]$ is the row vector of the coefficients. Since $\rank A=m=1$ then an extremal ray $\rr$ has at most two non-zero components.
\end{proof}

\begin{proposition}\label{binu}
The extremal rays of of the convex cone $\mathcal{C}_p$ in \ref{cone} are
\begin{equation}\label{binul}
p_{j_1,j_2}(y)=\left\{ \begin{array}{cc}
\frac{j_2-pd}{j_2-j_1}&y=j_1\\
\frac{pd-j_1}{j_2-j_1}&y=j_2\\
0&\text{otherwise}
\end{array}
\right.,
\end{equation}
with $j_1=0,1,\ldots, j_1^{M}$, $j_2=j_2^m, j_2^m+1, \ldots, d$, $j_1^M$ is the largest integer less that $pd$ and $j_2^m$ is the smallest integer greater than pd.

If $pd$ is integer  the extremal rays contain also
\begin{equation}\label{onenul}
p_{pd}(y)=\left\{ \begin{array}{cc}
1&y=pd\\
0&\text{otherwise}
\end{array}
\right..
\end{equation}

\end{proposition}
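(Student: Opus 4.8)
The plan is to invoke Corollary \ref{corp}, which already guarantees that every extremal ray $\rr$ of the cone $\mathcal{C}_p$ has at most two non-zero components, and then to enumerate exhaustively the admissible supports. With $a_j=j-pd$, any candidate ray is a non-negative vector supported on one or two indices satisfying the single linear constraint $\sum_{j=0}^d a_j z_j=0$; so the whole task reduces to deciding, for each possible support, which such vectors actually lie in the cone and which of those are extremal. I would organise the argument by the number of non-zero components.

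First I would dispose of the single-component case: suppose $z_{j_0}>0$ and $z_j=0$ for $j\neq j_0$. The constraint collapses to $a_{j_0}z_{j_0}=0$, which forces $a_{j_0}=0$, i.e. $j_0=pd$. This is feasible only when $pd$ is an integer, and after normalising to a pmf it produces exactly the ray \eqref{onenul}. This explains why the extra generator appears precisely in the integer case.

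Next I would treat two non-zero components $z_{j_1},z_{j_2}>0$ with $j_1<j_2$. The constraint reads $(j_1-pd)z_{j_1}+(j_2-pd)z_{j_2}=0$, and since both weights are strictly positive the coefficients $a_{j_1}$ and $a_{j_2}$ must have strictly opposite signs (neither can vanish, else the other weight would be forced to zero). This is the key structural observation: it forces $j_1<pd<j_2$, so that $j_1$ ranges over $0,\ldots,j_1^M$ and $j_2$ over $j_2^m,\ldots,d$, with $j_1^M$ and $j_2^m$ as defined in the statement. The solution space of the constraint restricted to the two coordinates $(z_{j_1},z_{j_2})$ is one-dimensional, spanned by $(j_2-pd,\,pd-j_1)$, which is componentwise positive; normalising by the sum $j_2-j_1$ yields precisely the pmf \eqref{binul}.

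The step that requires genuine care is the extremality verification, namely showing that each vector of the form \eqref{binul} or \eqref{onenul} really is an extremal ray and not merely a point of the cone. Here I would argue as follows: if $\rr$ is supported on $\{j_1,j_2\}$ and $\rr=\alpha\uu+\beta\ww$ with $\uu,\ww\in\mathcal{C}_p$ and $\alpha,\beta>0$, then non-negativity of $\uu$ and $\ww$ forces both to vanish outside $\{j_1,j_2\}$; but the admissible directions supported on $\{j_1,j_2\}$ form exactly the one-dimensional space identified above, so $\uu$ and $\ww$ are scalar multiples of $\rr$, proving $\rr$ extremal. The degenerate version of the same argument (a zero-dimensional space of feasible perturbations) handles the single-point ray \eqref{onenul}. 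Combined with Corollary \ref{corp}, which rules out any extremal ray with three or more non-zero components, this shows the list \eqref{binul}--\eqref{onenul} is both complete and free of redundancies, completing the proof.
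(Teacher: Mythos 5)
Your proposal is correct and follows the same skeleton as the paper's own proof: invoke Corollary \ref{corp} to bound the support of any extremal ray by two points, observe that positive two-point solutions require $a_{j_1}a_{j_2}<0$ (hence $j_1<pd<j_2$, giving the stated ranges for $j_1$ and $j_2$), solve the resulting one-dimensional system to get the direction $(j_2-pd,\,pd-j_1)$, normalize, and treat the coordinate $j=pd$ separately when $pd$ is an integer. The one genuine difference is that you explicitly prove the converse inclusion: that each candidate in \eqref{binul}--\eqref{onenul} really \emph{is} extremal, via the decomposition argument (any writing $\rr=\alpha\uu+\beta\ww$ with $\uu,\ww\in\mathcal{C}_p$ forces $\uu,\ww$ to share the support of $\rr$ by non-negativity, and then to be proportional to $\rr$ because the feasible directions on that support form a one-dimensional space). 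The paper's proof establishes only the necessary direction --- extremal rays must have this form --- and asserts extremality of the found solutions without verification, relying implicitly on the rank characterization of extremal rays cited from the literature. Your addition closes that gap at essentially no cost, so your write-up is, if anything, more complete than the original.
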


\begin{proof}
Let $a_j=j-pd$. Equation \ref{simp} becomes
\begin{equation}
\sum_{j=0}^da_jp_j=0.
\end{equation}
By Corollary \ref{corp} the extremal rays have at most two non zero components, say $j_1, j_2$. Therefore the extremal rays can be found considering the equations

\begin{equation}
a_{j_1}p_{j_1}+a_{j_2}p_{j_2}=0,
\end{equation}
where we make the non restrictive assumption $j_1<j_2$. The equation \ref{simp} has positive solutions only if $a_{j_1}a_{j_2}<0$. We observe that $a_{j_1}<0$ for $0\leq j_1 \leq j_1^M$ where $j_1^M$ is the largest integer less than $pd$ and $a_{j_2}>0$ for $j_2^m\leq j_2\leq d$ where $j_2^m$ is the smallest integer greater than $pd$. In this case we have $j_2^m=j_1^M+1$. It follows that for  $0\leq j_1 \leq j_1^M$  and $j_2^m\leq j_2\leq d$ we have $a_{j_1}a_{j_2}<0$. A positive solution of Equation \ref{simp} is

\begin{equation}
\left\{ \begin{array}{c}
\tilde{p}_{y}(j_1)=x_{j_1}=j_2-pd\\
\tilde{p}_{y}(j_2)=-x_{j_2}=pd-j_1
\end{array}
\right..
\end{equation}
We have $\tilde{p}_{y}(j_1)+\tilde{p}_{y}(j_2)=j_2-pd+pd-j_1=j_2-j_1$ and then the normalized extremal rays corresponding to $j_1$ and $j_2$ are given by \eqref{binul}.
If $pd$ is integer we have $a_{pd}=0$. It follows that \eqref{onenul} is also an extremal solution.

\end{proof}
We denote by $\Rjj$  and $R_{pd}$  the random variables whose pmf are $\rjj$ and $\rr_{pf}$ respectively.
We will refer to $\rjj$ and $\rr_{pf}$ as ray densities and
$\Rjj$  and $R_{pd}$ as ray random variables. Notice that $\rr_{(0,d)}=(1-p,0,\ldots,0, p)$.
\begin{corollary}\label{nary}
\begin{description}
\item If  $pd$ not integer there are $n_p=(j_1^m+1)(d-j_1^m)$ ray densities.
\item If $pd$ integer there are $n_p=d^2p(1-p)+1$ ray densities.
\end{description}
\end{corollary}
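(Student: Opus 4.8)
The plan is to count the ray densities characterized in Proposition \ref{binu} by separating the two-point rays of the form $\rjj$ from the possible single-point ray $\rr_{pd}$, and then counting the admissible index pairs $(j_1,j_2)$ in each case.

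First I would recall from Proposition \ref{binu} that the two-point extremal rays are indexed by pairs $(j_1,j_2)$ with $0\leq j_1\leq j_1^M$ and $j_2^m\leq j_2\leq d$, where $j_1^M$ is the largest integer strictly less than $pd$ and $j_2^m$ is the smallest integer strictly greater than $pd$. Since the admissible values of $j_1$ and $j_2$ range over disjoint index blocks and every such pair yields a distinct ray, the total number of two-point rays is simply the product of the number of choices for $j_1$ and the number of choices for $j_2$, namely $(j_1^M+1)(d-j_2^m+1)$.

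The second step is to split into the two cases according to whether $pd$ is an integer. When $pd$ is not an integer, there is no single-point ray, and one has $j_2^m=j_1^M+1$ (as noted in the proof of Proposition \ref{binu}). Substituting $d-j_2^m+1=d-j_1^M$ into the product gives $n_p=(j_1^M+1)(d-j_1^M)$, which matches the stated formula (here I read the paper's $j_1^m$ as $j_1^M$). When $pd$ is an integer, $j_1^M=pd-1$ and $j_2^m=pd+1$, so the product of two-point rays becomes $(pd)(d-pd)=d^2p(1-p)$; adding the single extra ray $\rr_{pd}$ from \eqref{onenul} yields $n_p=d^2p(1-p)+1$.

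The main obstacle is purely the careful bookkeeping at the boundary: one must verify that the index blocks for $j_1$ and $j_2$ do not overlap and that no pair is double-counted, and that the off-by-one conventions in $j_1^M$ (largest integer \emph{strictly} less than $pd$) and $j_2^m$ (smallest integer \emph{strictly} greater than $pd$) are handled correctly in both the integer and non-integer cases. Once the relation $j_2^m=j_1^M+1$ (non-integer case) and $j_2^m=j_1^M+2$ (integer case) are pinned down, the counts follow by elementary arithmetic, so the proof is essentially a short combinatorial verification rather than a substantive argument.
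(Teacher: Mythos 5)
Your proof is correct and follows exactly the route the paper intends: the corollary is stated without an explicit proof as an immediate consequence of Proposition \ref{binu}, and your enumeration of the index pairs $(j_1,j_2)$ — using $j_2^m=j_1^M+1$ in the non-integer case and $j_1^M=pd-1$, $j_2^m=pd+1$ plus the extra single-point ray in the integer case — is precisely that counting argument. You also correctly resolve the paper's notational slip of writing $j_1^m$ where $j_1^M$ is meant.
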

We have proved the following.
\begin{theorem}
The following holds.  $S_{d}\in \mathcal{S}_{d}(p )$ iff there exist $\lambda_1, \ldots, \lambda_{n_p}\geq 0$ summing up to 1 such that
\begin{equation}
\pp_{S}=\sum_{i=1}^{n_{p}}\lambda _{i}\rr_{i},
\end{equation}
where $\rr_{i}$ are the ray densities and $n_p$ is the number of ray densities.
\end{theorem}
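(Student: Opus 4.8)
The key observation is that the theorem is essentially a summary statement that repackages the structure built up in Corollary~\ref{corp}, Proposition~\ref{binu}, and Corollary~\ref{nary}. The hard mathematical work — namely showing that the extremal rays of the cone $\mathcal{C}_p$ have at most two nonzero components and exhibiting them explicitly — has already been done. So my proof would be short: it invokes the general theory of convex cones to conclude that every element of $\mathcal{C}_p$ is a nonnegative combination of its extremal rays.

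Let me sketch the steps. First I would recall from Proposition~\ref{sd} and the discussion following it that a distribution $\pp_S$ belongs to $\SSS(p)$ precisely when it is a nonnegative, normalized solution of the single linear equation $\sum_{j=0}^d (j-pd)p_j = 0$ in \eqref{simp}; that is, $S_d \in \SSS(p)$ iff $\pp_S$ lies in the convex cone $\mathcal{C}_p$ of \eqref{cone} and additionally satisfies the normalization $\sum_{j=0}^d p_j = 1$. Second, I would invoke the fundamental decomposition theorem for polyhedral (finitely generated) convex cones: since $\mathcal{C}_p$ is a polyhedral cone cut out by finitely many linear inequalities, it is generated by its finitely many extremal rays, so every $\zz \in \mathcal{C}_p$ can be written as $\zz = \sum_i c_i \rr_i$ with $c_i \geq 0$, where the $\rr_i$ are the extremal ray densities.

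Third, I would identify these generators. By Corollary~\ref{corp} the extremal rays have at most two nonzero components, and Proposition~\ref{binu} gives their explicit normalized form \eqref{binul} (together with \eqref{onenul} when $pd$ is integer); Corollary~\ref{nary} gives their count $n_p$. Since each $\rr_i$ is already normalized to be a probability mass function, the normalization condition $\sum_{j=0}^d p_j = 1$ on $\pp_S = \sum_i c_i \rr_i$ translates directly into $\sum_i c_i = 1$. Renaming $c_i = \lambda_i$ then yields exactly the claimed representation $\pp_S = \sum_{i=1}^{n_p} \lambda_i \rr_i$ with $\lambda_i \geq 0$ and $\sum_i \lambda_i = 1$. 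For the converse, any such convex combination of the $\rr_i$ is automatically a nonnegative, normalized solution of \eqref{simp}, hence lies in $\SSS(p)$.

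**The main obstacle is really just a matter of rigor rather than difficulty**, namely justifying cleanly why membership in the \emph{cone} plus the normalization constraint is equivalent to being a \emph{convex} combination of the normalized extremal rays (as opposed to a mere conic combination). The delicate point is that the extremal rays of $\mathcal{C}_p$ are rays (closed under positive scaling), and one must check that fixing the normalization $\sum_j p_j = 1$ selects exactly the convex hull of the normalized generators; this follows because each generator, when normalized, already satisfies $\sum_j p_j = 1$, so the affine constraint is preserved under convex combination and forces $\sum_i \lambda_i = 1$. Everything else is a direct citation of the preceding results, so the proof can legitimately conclude with \emph{``we have proved the following,''} as the paper does.
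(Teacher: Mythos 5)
Your proposal is correct and follows essentially the same route as the paper: the paper offers no separate argument for this theorem (it is introduced with ``we have proved the following''), relying exactly on the chain you describe --- Proposition~\ref{sd} reducing membership in $\SSS(p)$ to the cone $\mathcal{C}_p$ of \eqref{cone}, the standard generation of a (pointed) polyhedral cone by its extremal rays, the explicit rays of Proposition~\ref{binu} counted in Corollary~\ref{nary}, and the observation that normalization turns conic into convex combinations. Your explicit treatment of that last normalization step is the only place where you add rigor beyond what the paper makes explicit, and it is handled correctly.
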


\subsubsection{Second order moments}\label{secondmom}
Let $\XX\in \EEE(p)$ and let $\mu_2=E[X_iX_j]$ its second order cross moment.
\begin{proposition}
\label{prop_mu} Let $\boldsymbol{X}\in \mathcal{E}_d(p)$. It holds
\begin{equation}  \label{rho}
\mu_2=\sum_{k=0}^d\frac{k(k-1)}{d(d-1)}p_k.
\end{equation}
\end{proposition}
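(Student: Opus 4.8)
The plan is to compute the second (descending) factorial moment of $S_d=\sum_{i=1}^d X_i$ in two different ways and to match them. On one side this factorial moment is expressible through the pmf $(p_0,\ldots,p_d)$ of the number of defaults; on the other side, expanding the square of $S_d$ reduces it to the pairwise cross moments, each equal to $\mu_2$ by exchangeability. Equating the two expressions then isolates $\mu_2$.

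First I would write
\begin{equation*}
E[S_d(S_d-1)]=\sum_{k=0}^d k(k-1)\,p_k,
\end{equation*}
which is simply the definition of the factorial moment in terms of $p_k=P(S_d=k)$. Next I would expand
\begin{equation*}
S_d^2=\Big(\sum_{i=1}^d X_i\Big)^2=\sum_{i=1}^d X_i^2+\sum_{i\neq j}X_iX_j=\sum_{i=1}^d X_i+\sum_{i\neq j}X_iX_j,
\end{equation*}
using the crucial idempotency $X_i^2=X_i$, valid because the $X_i$ are Bernoulli. Subtracting $S_d=\sum_i X_i$ gives $S_d(S_d-1)=\sum_{i\neq j}X_iX_j$, and taking expectations, since $\XX\in\EEE(p)$ is exchangeable so that $E[X_iX_j]=\mu_2$ for every ordered pair $i\neq j$, yields
\begin{equation*}
E[S_d(S_d-1)]=\sum_{i\neq j}E[X_iX_j]=d(d-1)\,\mu_2,
\end{equation*}
there being exactly $d(d-1)$ ordered pairs with distinct indices.

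Matching the two expressions for $E[S_d(S_d-1)]$ and dividing by $d(d-1)$ delivers \eqref{rho}. The computation is essentially routine; the only point requiring care is the bookkeeping of the off-diagonal terms, namely recognizing that the diagonal contribution $\sum_i X_i^2$ collapses to $S_d$ via idempotency and that the cross sum runs over ordered pairs, so the normalizing constant is $d(d-1)$ rather than $\binom{d}{2}$. No appeal to the ray representation of Section~3.1 is needed: the identity holds for any $\XX\in\EEE(p)$, and it is exactly the bridge that transfers the correlation information from $\EEE(p,\rho)$ to the constraint on $\SSS(p,\rho)$ used in the sequel.
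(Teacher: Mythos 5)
Your proof is correct, but it follows a genuinely different route from the paper's. The paper argues probabilistically: it fixes a pair $(i,j)$, conditions on the number of defaults, and uses exchangeability to compute the conditional probability combinatorially,
\begin{equation*}
P(X_i=1,X_j=1\mid S_d=k)=\frac{\binom{d-2}{k-2}}{\binom{d}{k}}=\frac{k(k-1)}{d(d-1)},
\end{equation*}
i.e.\ a hypergeometric-type count of the configurations with $k$ ones that place a one in both positions $i$ and $j$; summing over $k$ gives \eqref{rho} directly. You instead compute the second factorial moment $E[S_d(S_d-1)]$ in two ways: once from the pmf of $S_d$, and once by expanding $S_d^2$, using idempotency $X_i^2=X_i$ and exchangeability to get $E[S_d(S_d-1)]=\sum_{i\neq j}E[X_iX_j]=d(d-1)\mu_2$ over the $d(d-1)$ ordered pairs. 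Both arguments are complete and rigorous. The paper's conditioning argument makes transparent \emph{why} the weights $k(k-1)/(d(d-1))$ appear (they are conditional pair-inclusion probabilities), while your algebraic identity is slightly more economical and has the added benefit of delivering, as a byproduct, the relation $E[S_d^2]=pd+d(d-1)\mu_2$, which the paper has to state separately as equation \eqref{smu2} before deriving the bounds on $\mu_2$; in that sense your single computation covers both steps at once.
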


\begin{proof}
By exchangeability we can fix any pair $i, j\in\{1,\ldots, d\}$. It holds
\begin{equation*}
\begin{split}
\mu_2=P(X_i=1, X_j=1)&=\sum_{k=0}^dP(X_i=1, X_j=1|S_d=k)P(S_d=k)\\
&=\sum_{k=2}^d\frac{\binom{d-2}{k-2}}{\binom{d}{k}}p_k=\sum_{k=2}^d\frac{k(k-1)}{d(d-1)}p_k\\
&=\sum_{k=0}^d\frac{k(k-1)}{d(d-1)}p_k,
\end{split}
\end{equation*}
\end{proof}
Thanks to the one-to-one map $E$ we can find the bounds for the second order moments of $\EEE(p)$ using the second order moments of $S_d$.
 We have
\begin{equation}\label{smu2}
E[S_d^2]=E[(X_1+\ldots+X_d)^2]=pd+d(d-1)\mu_2.
\end{equation}
\begin{proposition}
Let $\XX\in \EEE(p)$. Then
if $pd$ is not integer
\begin{equation}
\frac{1}{d(d-1)}[-j_1^m(j_1^m+1)+2j_1^m pd]\leq \mu_2\leq p.
\end{equation}
If $pd$ is integer
\begin{equation}\label{smu2int}
\frac{p(pd-1)}{(d-1)}\leq \mu_2\leq p.
\end{equation}

\end{proposition}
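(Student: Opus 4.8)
The plan is to exploit that $\mu_2$ is an affine functional of the loss distribution together with the convex-hull representation just obtained. By Proposition \ref{prop_mu}, setting $g(k):=\frac{k(k-1)}{d(d-1)}$, we may write $\mu_2=\sum_{k=0}^d g(k)\,p_k$, so $\mu_2$ depends linearly on $\pp_S$. Since the Theorem above represents every $\pp_S\in\SSS(p)$ as a convex combination of the ray densities, a linear functional attains its extreme values over $\SSS(p)$ at one of the generators. Hence it suffices to evaluate $\mu_2$ on each two-point ray $\rjj$ (and on the one-point ray of \eqref{onenul} when $pd$ is an integer) and to optimise over the admissible index pairs; by the one-to-one map $E$ this yields the bounds for $\EEE(p)$.

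Next I would compute $\mu_2$ on a generic ray. Substituting \eqref{binul} into $\sum_k g(k)p_k$ gives
\begin{equation*}
\mu_2(\rjj)=\frac{j_2-pd}{j_2-j_1}\,g(j_1)+\frac{pd-j_1}{j_2-j_1}\,g(j_2),
\end{equation*}
which is exactly the ordinate at the abscissa $pd$ of the chord of $g$ joining $(j_1,g(j_1))$ and $(j_2,g(j_2))$. The key observation is that $g$ is strictly convex (a quadratic with positive leading coefficient), so for the fixed interior point $pd$ the chord value is monotone in the bracketing interval: the secant over a wider interval $[j_1,j_2]$ lies above the secant over any sub-interval sharing the evaluation point, so widening raises the value and shrinking lowers it.

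This pins down both extremes. For the maximum, the widest admissible bracket is $[0,d]$, i.e. the comonotonic ray $\rr_{(0,d)}=(1-p,0,\ldots,0,p)$; since $g(0)=0$ and $g(d)=1$, the chord value at $pd$ equals $p$, giving $\mu_2\le p$ in both cases. For the minimum in the non-integer case the tightest admissible bracket is $[j_1^M,j_1^M+1]$, where $j_1^M$ is the largest integer less than $pd$; a direct expansion of the chord value with $j_2=j_1+1$ collapses to $\frac{1}{d(d-1)}\bigl[-j_1^M(j_1^M+1)+2j_1^M\,pd\bigr]$, which is the stated lower bound. In the integer case the one-point ray of \eqref{onenul} is admissible, and its value is $g(pd)=\frac{pd(pd-1)}{d(d-1)}=\frac{p(pd-1)}{d-1}$; by Jensen's inequality $g(pd)\le\mu_2(\rjj)$ for every two-point ray, so this degenerate ray is the global minimiser, yielding \eqref{smu2int}.

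The step I expect to be the main obstacle is making the chord-monotonicity argument rigorous rather than merely geometric. I would handle it in one of two ways: either invoke convexity directly (the secant of a convex function over a larger interval dominates, at a common interior abscissa, the secant over a sub-interval), or, to avoid any appeal to a picture, treat the closed form $\mu_2(\rjj)$ as a function of the integer pair $(j_1,j_2)$ and verify monotonicity in each argument separately. The latter route reduces the whole claim to an elementary algebraic inequality and reuses precisely the expansion already needed for the non-integer lower bound, so the two computations can be carried out together.
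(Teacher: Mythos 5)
Your proof is correct, and its skeleton coincides with the paper's: both reduce the problem, via the convex-hull representation of $\SSS(p)$, to evaluating an affine functional of $\pp_S$ on the ray densities, and both single out $\rr_{(0,d)}$, $\rr_{(j_1^M,j_1^M+1)}$ and, when $pd$ is an integer, the degenerate ray concentrated at $pd$ as the extremizers. Where you genuinely part ways is in how the functional is written and how the extremal rays are identified. The paper passes through $E[S_d^2]=pd+d(d-1)\mu_2$, computes the ray moments $E[R_{(j_1,j_2)}^2]=-j_1j_2+(j_1+j_2)pd$ and $E[R_{pd}^2]=(pd)^2$, and then asserts (``we easily get'') which rays maximize and minimize this expression; making that step precise requires observing that $-j_1j_2+(j_1+j_2)pd$ is increasing in $j_2$ (coefficient $pd-j_1>0$) and decreasing in $j_1$ (coefficient $pd-j_2<0$). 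You instead keep $\mu_2=\sum_k g(k)p_k$ with $g(k)=\frac{k(k-1)}{d(d-1)}$ and read the ray value as the ordinate at $pd$ of the chord of the convex function $g$ over $[j_1,j_2]$, so the extremes follow from the nesting $[j_1^M,j_1^M+1]\subseteq[j_1,j_2]\subseteq[0,d]$ of all admissible brackets, plus Jensen's inequality for the one-point ray in the integer case. Your route supplies the rigor the paper elides at its ``easily'' step, and it is more general: the chord/Jensen argument applies verbatim to any convex function of $k$, so it bounds other moment-type functionals of the loss without redoing the algebra. The paper's route is more computational but produces the closed form $-j_1j_2+(j_1+j_2)pd$ explicitly; incidentally, your direct evaluation of the chord over $[0,d]$ (giving exactly $p$) avoids a typo in the paper's proof, where $E[R_{(0,d)}^2]$ is stated as $(pd)^2$ while the correct value is $pd^2$, which is what actually yields $\mu_2^M=p$.
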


\begin{proof}
From \eqref{smu2} we have $\mu_2=\frac{1}{d(d-1)}[E[S^2_d]-pd]$.  Since $S_d\in \SSS(p)$ its density is a convex linear combinations of the ray densities. It is known that the moments of $S_d$ are moments of the ray variables.
We obtain
\begin{equation}\label{mu2R}
E[\Rjj^2]=j_1^2\frac{j_2-pd}{j_2-j_1}+j_2^2\frac{j_2-pd}{pd-j_1}=-j_1j_2+(j_1+j_2)pd,
\end{equation}
and
\begin{equation}\label{mu2R2}
E[R_{pd}^2]=(pd)^2.
\end{equation}

To maximize $\mu_2$ we have to maximize $E[S_d^2]$. From \eqref{mu2R} and  \eqref{mu2R2} we easily get that the ray variable for which the second order moment is maximum is $R_{(0,d)}$ and we have $E[R_{(0,d)}^2]=(pd)^2$. Then, after some computations, $\mu_2^M=p$.

To minimize  $\mu_2$ we have to minimize $E[S_d^2]$. We consider two cases.

If $pd$ is not integer, from \eqref{mu2R} we have that the ray variable for which the second order moment is minimum is $R_{(j_1^M, j_2^m)}=R_{(j_1^M, j_1^M+1)}$, for which we have
$E[R_{(j_1^M, j_1^M+1)}^2]=-j_1^m(j_1^m+1)+(2j_1^m+1)pd$ and the assert follows.

If $pd$ is integer  the ray variable for which the second order moment is minimum is $R_{pd}$. Since $E[R_{pd}^2]=(pd)^2$, \eqref{smu2int} follows.

\end{proof}
Thanks to equation \eqref{eq:rho_e12}, the next corollary to the above proposition provides bounds for the correlation coefficient.
\begin{corollary}
Let $\XX\in \EEE(p)$. Then
if $pd$ is not integer
\begin{equation}
\frac{\frac{1}{d(d-1)}[-j_1^m(j_1^m+1)+2j_1^m pd]-p^2}{p(1-p)}\leq \rho\leq 1.
\end{equation}
If $pd$ is integer
\begin{equation}
-\frac{1}{d-1}\leq \rho\leq 1.
\end{equation}
\end{corollary}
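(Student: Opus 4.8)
The plan is straightforward, because this corollary is a direct transcription of the preceding proposition through the affine identity \eqref{eq:rho_e12}. Recall that for a pair of $B(p)$ marginals the cross-moment and the correlation satisfy $\mu_2 = \rho pq + p^2$ with $q = 1-p$; inverting this relation gives $\rho = (\mu_2 - p^2)/(pq)$. Since $pq = p(1-p) > 0$ for $0 < p < 1$, the right-hand side is an increasing affine function of $\mu_2$, so the lower (resp.\ upper) bound on $\rho$ is obtained simply by substituting the lower (resp.\ upper) bound on $\mu_2$ furnished by the proposition. No new extremal analysis is needed, since the extremizing ray variables were already identified there.

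First I would dispose of the upper bound, which is common to both cases. Substituting $\mu_2 = p$ yields $\rho = (p - p^2)/(p(1-p)) = 1$, as claimed. For the non-integer case, substituting the lower bound $\mu_2 = \frac{1}{d(d-1)}[-j_1^m(j_1^m+1)+2j_1^m pd]$ into $\rho = (\mu_2 - p^2)/(pq)$ reproduces the stated expression verbatim, with no further manipulation required; this is purely a matter of writing the same quantity in the variable $\rho$.

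The only step that involves any computation is the integer-case lower bound. Starting from $\mu_2 = p(pd-1)/(d-1)$, I would simplify the numerator as $\mu_2 - p^2 = \frac{p(pd-1) - p^2(d-1)}{d-1} = \frac{p\,[(pd-1)-p(d-1)]}{d-1} = \frac{-p(1-p)}{d-1}$, and then divide by $pq = p(1-p)$ to obtain $\rho = -1/(d-1)$. The cancellation of the common factor $p(1-p)$ is the only point needing attention, and it is entirely elementary. Consequently there is no genuine obstacle in this corollary: all the substantive work lies in the bounds on $\mu_2$ already established, and what remains is a single change of variable together with one short algebraic simplification.
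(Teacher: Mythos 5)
Your proposal is correct and follows exactly the paper's (implicit) argument: the paper derives this corollary from the preceding proposition purely ``thanks to equation \eqref{eq:rho_e12}'', i.e.\ by the same monotone affine change of variable $\rho = (\mu_2 - p^2)/(p(1-p))$ that you use, and your algebraic simplification in the integer case correctly reproduces the bound $-1/(d-1)$.
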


\subsection{For given marginal default probabilities and default correlations}

In this section we consider the class of multivariate exchangeable Bernoulli
mass functions with given margins $p$ and given correlation $\rho $, i.e.
the class $\mathcal{E}_d(p,\rho )$.
We now find the generators of $\SSS(p, \rho)$.

Since  $S\in \SSS(p, \rho)$ iff $E[S_d]=pd$ and $E[S_d^2]=dp+d(d-1)\mu_2$, we can  define an homogeneous linear system whose solutions are the pmf in $\SSS(p, \rho)$.

\begin{proposition}
The following holds.  $S_{d}\in \mathcal{S}_{d}(p,\rho )$ iff there exist $\lambda_1, \ldots, \lambda_{n_p}\geq 0$ summing up to 1 such that
\begin{equation}
\pp_{S}=\sum_{i=1}^{n_{p}}\lambda _{i}\rr_{\rho, i},
\end{equation}
where $\rr_{ \rho, i}$ are the normalized extremal rays of the cone $\mathcal{C}_{p,\rho }$ defined by linear system:

\begin{equation}  \label{syst}
\left\{
\begin{array}{c}
\sum_{j=0}^d[{j}-pd]p_j=0 \\
\sum_{j=0}^d[j^2-(pd+d(d-1)\mu_2)]p_j=0.
\end{array}
\right .
\end{equation}
\end{proposition}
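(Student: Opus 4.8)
The plan is to mirror the argument already used for $\mathcal{S}_d(p)$, replacing the single moment constraint of Proposition \ref{sd} by the pair of constraints that fix both the mean and the second moment of $S_d$, and then to invoke the same extremal-ray machinery.

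First I would recast membership in $\mathcal{S}_d(p,\rho)$ as a system of moment equations. By the one-to-one correspondence $E$ and Proposition \ref{d-eq-s}, a pmf $\pp_S$ on $\{0,\ldots,d\}$ lies in $\mathcal{S}_d(p,\rho)$ exactly when $E[S_d]=pd$ and the common pairwise correlation equals $\rho$. Using \eqref{eq:rho_e12} the correlation constraint is equivalent to fixing $\mu_2=\rho pq+p^2$, and by \eqref{smu2} this is in turn equivalent to fixing $E[S_d^2]=pd+d(d-1)\mu_2$. Thus $S_d\in\mathcal{S}_d(p,\rho)$ iff $\pp_S$ is a pmf with $E[S_d]=pd$ and $E[S_d^2]=pd+d(d-1)\mu_2$.

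Next I would homogenize these two moment equations exactly as in Proposition \ref{sd}. Writing $\sum_{j=0}^d p_j=1$, the mean condition $\sum_j j p_j = pd$ becomes $\sum_j (j-pd)p_j=0$, and the second-moment condition $\sum_j j^2 p_j = pd+d(d-1)\mu_2$ becomes $\sum_j [j^2-(pd+d(d-1)\mu_2)]p_j=0$; conversely, any nonnegative vector satisfying these two homogeneous equations together with $\sum_j p_j=1$ has mean $pd$ and second moment $pd+d(d-1)\mu_2$. This is precisely the system \eqref{syst}, so the set of nonnegative solutions of \eqref{syst} is the cone $\mathcal{C}_{p,\rho}$. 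Then I would apply the same convex-cone theory used in Proposition \ref{multinulli} (following Lemma 2.3 in \cite{terzer2009large}): the nonnegative solutions of \eqref{syst} form the pointed cone $\mathcal{C}_{p,\rho}$, generated by its extremal rays $\tilde{\rr}_{\rho,i}$, and since the coefficient matrix has rank $2$, Proposition \ref{multinulli} gives that each such ray has at most three non-zero components. Given $S_d\in\mathcal{S}_d(p,\rho)$, the vector $\pp_S$ is a nonnegative solution of \eqref{syst}, hence $\pp_S=\sum_i c_i\tilde{\rr}_{\rho,i}$ with $c_i\geq 0$. Normalizing each generator to a pmf $\rr_{\rho,i}:=\tilde{\rr}_{\rho,i}/s_i$, $s_i:=\sum_j(\tilde{\rr}_{\rho,i})_j>0$, and setting $\lambda_i:=c_is_i$, the identity $\sum_j(\pp_S)_j=1$ forces $\sum_i\lambda_i=1$, giving the claimed convex representation. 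For the converse, each $\rr_{\rho,i}$ is by construction a pmf satisfying the two homogeneous equations, so any convex combination $\sum_i\lambda_i\rr_{\rho,i}$ is again a pmf satisfying them, and by the first step is an element of $\mathcal{S}_d(p,\rho)$.

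The main obstacle is not computational but conceptual: the delicate point is the first step, namely verifying that fixing $p$ and $\rho$ for the exchangeable vector is genuinely equivalent to fixing the first two moments of $S_d$, so that \eqref{syst} is both necessary and sufficient. Once this equivalence is secured, the remaining steps are a direct transcription of the rank-$1$ argument of Corollary \ref{corp} and Proposition \ref{binu} to the rank-$2$ setting, with the bound on the support size coming for free from Proposition \ref{multinulli}.
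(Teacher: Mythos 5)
Your proposal is correct and follows exactly the route the paper intends: the paper states this proposition without a written proof, justifying it only by the preceding remark that $S_d\in\mathcal{S}_d(p,\rho)$ iff $E[S_d]=pd$ and $E[S_d^2]=pd+d(d-1)\mu_2$, together with the extremal-ray machinery already set up for $\mathcal{S}_d(p)$ (Proposition \ref{multinulli} and the cone representation). Your write-up simply fills in the same steps explicitly --- the moment-equivalence via \eqref{eq:rho_e12} and \eqref{smu2}, the homogenization into \eqref{syst}, and the normalization bookkeeping for the convex weights --- so there is no substantive difference from the paper's argument.
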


The following corollary of Proposition \ref{multinulli} characterizes the ray densities of $\SSS(p, \rho)$.


\begin{corollary}\label{trnu}
The extremal rays of $\mathcal{S}_d(p,\rho)$ have support on at most
three points.
\end{corollary}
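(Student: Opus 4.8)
The plan is to read the two equations of system \eqref{syst} as a single homogeneous linear system $A\zz=0$ of exactly the shape handled by Proposition \ref{multinulli}, and then to invoke that proposition with $m=2$. Writing the unknown density as the column vector $\zz=(p_0,\ldots,p_d)$ in $\RR^{d+1}$, the two constraints in \eqref{syst} are precisely $A\zz=0$, where $A$ is the $2\times(d+1)$ matrix whose rows are
\begin{equation*}
a_j=j-pd,\qquad b_j=j^2-\bigl(pd+d(d-1)\mu_2\bigr),\qquad j=0,\ldots,d.
\end{equation*}
So the whole claim reduces to checking that $\rank A=2$: once this is established, Proposition \ref{multinulli} (with $m=2$) says every extremal ray of the associated cone $\mathcal{C}_{p,\rho}$ has at most $m+1=3$ non-zero components, which is exactly the statement that the ray densities of $\SSS(p,\rho)$ have support on at most three points.

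The one step requiring an argument is the linear independence of the two rows. First I would suppose $\alpha a_j+\beta b_j=0$ for all $j\in\{0,\ldots,d\}$. The left-hand side, viewed as a function of $j$, is the polynomial $\beta j^2+\alpha j-(\alpha pd+\beta(pd+d(d-1)\mu_2))$ of degree at most two, and it vanishes at the $d+1$ distinct integer nodes $j=0,1,\ldots,d$. For $d\geq 2$ these are at least three roots of a polynomial of degree at most two, forcing it to be identically zero; matching the coefficient of $j^2$ gives $\beta=0$, and then the coefficient of $j$ gives $\alpha=0$. Hence the rows are independent and $\rank A=2$, and applying Proposition \ref{multinulli} finishes the proof.

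I do not expect a substantial obstacle here, since the corollary is a direct specialization of Proposition \ref{multinulli}; the only genuine content is the rank count, which is immediate because one row is affine in $j$ while the other is genuinely quadratic, and an affine and a quadratic sampling cannot be proportional once we have at least three nodes. The degenerate low-dimensional cases $d\leq 1$, in which a pairwise correlation is not even meaningful, are naturally excluded, so the hypothesis $d\geq 2$ underlying the node-counting argument is harmless.
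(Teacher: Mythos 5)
Your proof is correct and follows essentially the same route as the paper: both reduce the corollary to Proposition \ref{multinulli} applied to the $2\times(d+1)$ coefficient matrix $A$ of system \eqref{syst}, obtaining at most $m+1=3$ non-zero components. If anything, your polynomial/node-counting argument that $\rank A=2$ for $d\geq 2$ is a small refinement: the paper only asserts $\rank A\leq 2$ and then case-splits on the rank of the submatrix $\II^*$, whereas your version verifies the full-row-rank hypothesis under which Proposition \ref{multinulli} is literally stated.
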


\begin{proof}
The extremal rays of \eqref{simp} are the normalized extremal rays of the convex cone $\mathcal{C}=\{\zz\in \RR^{d+1}:  A\zz=0,\, I\zz\geq 0\}$, where $A$ is the matrix coefficients of \eqref{syst}.  We have $\rank A\leq 2$.
From Proposition \ref{multinulli} it follows $\text{rank} (\II^*)=d-3,\,d-2,\,d-1$  and $\II^*=(e_3,\ldots,e_n)^T$ to let  $\AA|\II$ have  $d-1$ independent rows. Since $\II^*\RRR=0$, if  $\text{rank} (\II^*)=d-3$, $\RRR$ has only three non zero components, if  $\text{rank} (\II^*)=d-2$, $\RRR$ has only two non zero components, and if  $\text{rank} (\II^*)=d-1$, $\RRR$ has only one non zero component. In the latter case all the mass is  one point.

\end{proof}

\begin{proposition}\label{trinulexpr}

The extremal rays of \eqref{simp} are $\rr_{\rho}=(p_0,\ldots, p_d)$, where $p_l=0, l\neq i,j,k$,
\begin{equation}
\begin{split}
&p_i=\frac{jk-(j+k-1)dp+d(d-1)\mu_{2}}{(k-i)(j-i)}\\
&p_j=-\frac{ik-(i+k-1)dp+d(d-1)\mu_{2}}{(k-j)(j-i)}\\
&p_k=\frac{ij-(i+j-1)dp+d(d-1)\mu_{2}}{(k-j)(k-i)},\\
\end{split}
\end{equation}
with $i<j<k$ and $p_i, p_j, p_k\geq0$
\end{proposition}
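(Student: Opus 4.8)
The plan is to lean on Corollary \ref{trnu}, which guarantees that every extremal ray of $\mathcal{S}_d(p,\rho)$ has support on at most three points, and then to determine the ray supported on a fixed triple explicitly. I would therefore fix a candidate support $\{i,j,k\}$ with $i<j<k$ and write $M_2:=E[S_d^2]=pd+d(d-1)\mu_2$, as in \eqref{smu2}. Restricting the two homogeneous constraints of the defining system \eqref{syst} to these three non-zero coordinates gives $(i-pd)p_i+(j-pd)p_j+(k-pd)p_k=0$ and $(i^2-M_2)p_i+(j^2-M_2)p_j+(k^2-M_2)p_k=0$. Both equations are homogeneous, hence invariant under rescaling, so I may adjoin the normalization $p_i+p_j+p_k=1$ without loss of generality; after normalization the two constraints read exactly $\sum_l l\,p_l=pd$ and $\sum_l l^2 p_l=M_2$.

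This reduces the proposition to solving the single linear system
\begin{equation*}
\begin{pmatrix} 1 & 1 & 1 \\ i & j & k \\ i^2 & j^2 & k^2 \end{pmatrix}
\begin{pmatrix} p_i \\ p_j \\ p_k \end{pmatrix}
=\begin{pmatrix} 1 \\ pd \\ M_2 \end{pmatrix}.
\end{equation*}
The coefficient matrix is Vandermonde, with determinant $(j-i)(k-i)(k-j)\neq 0$ because $i<j<k$; the ray is therefore unique and I would recover it by Cramer's rule, evaluating the three numerator determinants obtained by replacing the successive columns with $(1,pd,M_2)^{T}$.

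The key computation is the factorization of each numerator. For $p_i$, cofactor expansion along the first column yields $(k-j)\big[jk-pd(j+k)+M_2\big]$; substituting $M_2=pd+d(d-1)\mu_2$ collapses the bracket to $jk-(j+k-1)pd+d(d-1)\mu_2$, and the common factor $(k-j)$ cancels against the Vandermonde determinant, leaving $p_i=\dfrac{jk-(j+k-1)dp+d(d-1)\mu_2}{(k-i)(j-i)}$. The analogous expansions for $p_j$ and $p_k$ throw out the factors $(k-i)$ and $(j-i)$ respectively, each cancelling one factor of the denominator and producing the stated expressions, with the minus sign in $p_j$ coming from the cofactor sign $(-1)^{2+3}$. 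I expect this determinant bookkeeping — keeping the signs straight and verifying that every bracket collapses to the claimed symmetric form — to be the only real obstacle; it is entirely routine but must be carried out carefully. Finally, since an admissible solution must be a genuine probability mass function, the condition $p_i,p_j,p_k\ge 0$ is exactly what singles out the triples $(i,j,k)$ that give valid extremal rays, completing the characterization.
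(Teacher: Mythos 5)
Your proposal is correct and follows essentially the same route as the paper: invoke Corollary \ref{trnu} to reduce to a three-point support $\{i,j,k\}$, then solve the restricted linear system by Cramer's rule and normalize. The only (cosmetic) difference is that you adjoin the normalization $p_i+p_j+p_k=1$ and solve a $3\times 3$ Vandermonde system, whereas the paper sets $x_k=1$, solves the resulting $2\times 2$ system, and normalizes afterwards; your Vandermonde factorization actually makes explicit the determinant computation the paper leaves as ``standard.''
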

\begin{proof}
The extremal rays of \eqref{simp} can be found as follows. Let $\alpha_j:=j-pd$ and $\beta_j:=j^2-(pd+d(d-1)\mu_2)$, we can write system \eqref{syst} as follows:
\begin{equation}\label{syst2}
\left\{
\begin{array}{c}
\sum_{j=0}^d\alpha_jp_j=0
\\
\sum_{j=0}^d\beta_jp_j=0,
\end{array}
\right.
\end{equation}
Let now $A=\begin{bmatrix} \alpha_0& \ldots& \alpha_d\\ \beta_0 &\ldots& \beta_d\end{bmatrix}$. From Corollary \ref{trnu} we have to find the positive solutions $(z_j, z_j, z_k)$,  for $i<j<k$ , of

\begin{equation}\label{syst2}
\left\{
\begin{array}{c}
\alpha_ix_i+\alpha_jx_j+\alpha_kx_k=0
\\
\beta_ix_i+\beta_jx_j+\beta_kx_k=0,
\end{array}
\right.
\end{equation}
Then, from a positive solution, we find $p_l=\frac{z_l}{z_i+z_j+z_k}$, $l\in \{i,j,k\}$.

Letting $x_k=1$ the system \ref{syst2} becomes

\begin{equation}\label{syst2}
\left\{
\begin{array}{c}
\alpha_ix_i+\alpha_jx_j=-\alpha_k\\
\beta_ix_i+\beta_jx_j=-\beta_k,
\end{array}
\right.
\end{equation}
and its solution can be  determined by standard computation using Cramer's formula.
\end{proof}
%
We conclude this section with the following proposition that gives necessary and sufficient conditions for a ray density in $\SSS(p)$ to be also a ray density in $\SSS(p, \rho)$ .

\begin{proposition}
A ray density $\rr\in \SSS(p, \rho)$  has support on two points iff it is a ray density in $\SSS(p)$ and $\mu_2^{\rr}=\mu_2$, where $\mu_2^{\rr}$ is the second order cross moment of $\rr$.

\end{proposition}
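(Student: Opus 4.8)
The plan is to prove both directions of the biconditional, treating it as a characterization of when a two-point ray in $\SSS(p)$ survives as a ray in the smaller class $\SSS(p,\rho)$. The key observation is that $\SSS(p,\rho)$ is the subset of $\SSS(p)$ cut out by the additional second-moment constraint $E[S_d^2]=pd+d(d-1)\mu_2$, equivalently $\mu_2^{\rr}=\mu_2$ in the notation of the statement. So membership in $\SSS(p,\rho)$ already forces the moment condition; the content is about how this interacts with the support size.

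For the forward direction, suppose $\rr\in\SSS(p,\rho)$ has support on exactly two points, say $j_1<j_2$. Then $\rr$ satisfies the first equation of system \eqref{syst}, namely $\sum_{j=0}^d(j-pd)p_j=0$, so by Proposition \ref{binu} (and Corollary \ref{corp}) it is necessarily the two-point extremal ray $\rr_{(j_1,j_2)}$ of the cone $\cone_p$, i.e.\ a ray density in $\SSS(p)$. That $\mu_2^{\rr}=\mu_2$ is immediate from $\rr\in\SSS(p,\rho)$, since every distribution in $\SSS(p,\rho)$ has second order cross moment equal to the prescribed $\mu_2$ via \eqref{smu2}.

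For the converse, suppose $\rr$ is a ray density in $\SSS(p)$, hence of the form $\rr_{(j_1,j_2)}$ from \eqref{binul} with support on the two points $j_1<j_2$, and suppose $\mu_2^{\rr}=\mu_2$. The first condition gives that $\rr$ satisfies the mean equation, and the moment hypothesis $\mu_2^{\rr}=\mu_2$ is, through \eqref{smu2}, exactly the statement that $E_{\rr}[S_d^2]=pd+d(d-1)\mu_2$, i.e.\ the second equation of \eqref{syst}. Thus $\rr$ lies in the cone $\cone_{p,\rho}$. To conclude it is a \emph{ray density} of $\SSS(p,\rho)$ and not merely a member, I would argue extremality: a point of a convex cone supported on only two coordinates cannot be written as a nontrivial convex combination of other elements of $\cone_{p,\rho}\subseteq\cone_p$ without those elements sharing the same two-point support, but the two-point feasible solutions of \eqref{syst} are one-dimensional (determined up to scaling by the two linear equations restricted to the coordinates $j_1,j_2$), so $\rr$ is extremal.

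The main obstacle is the extremality check in the converse: one must be careful that satisfying both moment equations on a two-point support genuinely yields an extreme ray of the two-equation cone rather than an interior point. The cleanest route is to invoke the rank criterion used in the proof of Corollary \ref{trnu}: a two-point solution corresponds to $\rank(\II^*)=d-2$, which is precisely the condition making the augmented matrix $\AA|\II$ have $d-1$ independent rows, the defining property of an extremal ray. I would therefore lean on that same rank argument rather than re-deriving extremality by hand, so the only real computation is translating $\mu_2^{\rr}=\mu_2$ into the second linear constraint via \eqref{smu2}, which is routine.
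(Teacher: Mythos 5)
Your proof is correct and follows essentially the same route as the paper's: both directions come down to checking the two linear moment constraints of \eqref{syst} (the mean equation plus the second-moment equation obtained from $\mu_2^{\rr}=\mu_2$ via \eqref{smu2}) and then using the two-point support to conclude extremality in the relevant cone. The only difference is that you make the extremality step explicit (one-dimensionality of the kernel restricted to the two support coordinates, equivalently the rank criterion), whereas the paper simply asserts it; this is a fill-in of detail, not a different argument.
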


\begin{proof}
If $\rr$ is a solution of \eqref{syst} it is also a solution of \eqref{simp} and since it has support of two poins by assumption it is an extremal solution. Thus $\rr\in \SSS(p)$ is a ray density. Viceversa if $\rr\in \SSS(p)$ it satisfies the first equation of \eqref{syst} by definition  and if $\mu_2^{\rr}=\mu_2$ it also satisfy the second equation by construction.
Since it has mass on two points it is an extremal solution of \eqref{syst}.
\end{proof}
\section{Financial risk measures and their bounds}\label{FM}

As measures of portfolio risk we consider the value at risk (VaR) and the
expected shortfall (ES) of $S_d$. We recall their definition for a general random variable $Y$.

\begin{definition}
Let $Y$ be a random variable representing a loss with finite mean. Then the
$ \VaR$ at level $\alpha $ is defined by
\begin{equation}
 \VaR(Y)=\inf \{y\in \mathbb{R}:P(Y\leq y)\geq \alpha \}
\end{equation}%
and the expected shortfall at level $\alpha $ is defined by
\begin{equation}
 \ES(Y)=E[Y|Y\geq Var_{\alpha }(Y)]
\end{equation}
\end{definition}

The following proposition provides the bounds for the $ \VaR$ and $\ES$ of $S_d$, for $S_d$ in a given class.

\begin{proposition}
\begin{enumerate}
\item Let $S_d\in \mathcal{S}_d(p) [\SSS(p, \rho)]$ and let $\VaR(S_d)$ be its value at risk. Then $$\min_{R}  \VaR(R)\leq  \VaR(S_d)\leq \max_{R}  \VaR(R),$$ where $R$ are the ray densities of $\mathcal{S}_d(p) [\SSS(p, \rho)]$.

\item Let $S_d\in \mathcal{S}_d(p)[\SSS(p, \rho)]$ and let $ES_{\alpha}(S_d)$ be its expected
shortfall. Then $$\min_R  \VaR(R)\leq \ES(S_d)\leq d,$$ where $R$ are the ray densities of $\mathcal{S}_d(p) [\SSS(p, \rho)]$.
\end{enumerate}
\end{proposition}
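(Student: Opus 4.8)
The plan is to exploit the convex representation of $S_d$ established in the preceding results: every $S_d\in\SSS(p)$ (respectively $\SSS(p,\rho)$) has pmf $\pp_S=\sum_i\lambda_i\rr_i$ with $\lambda_i\geq 0$, $\sum_i\lambda_i=1$, and the $\rr_i$ the ray densities of the relevant class. The crucial consequence is that the cumulative distribution function inherits this convexity pointwise: writing $F_{R_i}$ for the cdf of the $i$-th ray variable $R_i$, one has $\PP(S_d\leq y)=\sum_i\lambda_i\PP(R_i\leq y)$ for every $y\in\RR$. All the VaR inequalities then follow from this identity together with the definition $\VaR(S_d)=\inf\{y\in\RR:\PP(S_d\leq y)\geq\alpha\}$, and this argument is uniform over the two classes since the only input is the convex representation, which holds in each.

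For the upper bound in part (1), set $y_{\max}=\max_R\VaR(R)$. Since $y_{\max}\geq\VaR(R_i)$ and cdfs are nondecreasing, $\PP(R_i\leq y_{\max})\geq\PP(R_i\leq\VaR(R_i))\geq\alpha$ for every $i$; taking the convex combination gives $\PP(S_d\leq y_{\max})=\sum_i\lambda_i\PP(R_i\leq y_{\max})\geq\alpha$, so $y_{\max}$ lies in the defining set of the infimum and $\VaR(S_d)\leq y_{\max}$. For the lower bound, set $y_{\min}=\min_R\VaR(R)$ and take any $y<y_{\min}$. Then $y<\VaR(R_i)$ for every $i$, and because the infimum defining $\VaR(R_i)$ is attained from the right, $\PP(R_i\leq y)<\alpha$ for every $i$; since the $\lambda_i$ are nonnegative and sum to one, these strict inequalities propagate to $\PP(S_d\leq y)=\sum_i\lambda_i\PP(R_i\leq y)<\alpha$, so no $y<y_{\min}$ belongs to the defining set and $\VaR(S_d)\geq y_{\min}$.

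The two ES bounds are then immediate consequences. The upper bound $\ES(S_d)\leq d$ holds because $S_d=\sum_{i=1}^dX_i$ with each $X_i\in\{0,1\}$ forces $S_d\leq d$ almost surely, so any conditional expectation of $S_d$, in particular $E[S_d\mid S_d\geq\VaR(S_d)]$, is at most $d$. For the lower bound, conditioning on $\{S_d\geq\VaR(S_d)\}$ gives $\ES(S_d)=E[S_d\mid S_d\geq\VaR(S_d)]\geq\VaR(S_d)$, and combining this with the lower VaR bound of part (1) yields $\ES(S_d)\geq\VaR(S_d)\geq\min_R\VaR(R)$.

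I expect the only delicate point to be the lower VaR bound, where one must verify that the strict inequalities $\PP(R_i\leq y)<\alpha$ survive the convex combination; this rests on the facts that the $\lambda_i$ sum to one (so at least one weight is positive) and that the infimum in the definition of VaR must be read as giving $\PP(R_i\leq y)<\alpha$ strictly below the quantile. Everything else reduces to the linearity of the cdf in the mixing weights, which the convex representation supplies directly, so no dimension-dependent computation is needed.
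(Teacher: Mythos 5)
Your proof is correct and follows essentially the same route as the paper: both exploit the convex decomposition $\pp_S=\sum_i\lambda_i\rr_i$, evaluate the mixture cdf at $\max_R \VaR(R)$ for the upper bound, show the mixture cdf stays below $\alpha$ to the left of $\min_R \VaR(R)$ for the lower bound, and obtain the ES bounds as immediate consequences. If anything, your handling of the lower bound is the more rigorous one: the paper evaluates the mixture cdf at $\tau_m=\min_i\VaR(R_i)$ and asserts $P(R_i\leq \tau_m)\leq\alpha$ for every $i$, which fails for the minimizing ray (where $P(R_i\leq \tau_i)\geq\alpha$ by definition of $\VaR$), whereas your argument---taking $y<\min_R\VaR(R)$ strictly, using $P(R_i\leq y)<\alpha$ for all $i$, and noting the strict inequality survives the finite convex combination---is exactly what is needed to close that gap.
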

\begin{proof}
\begin{enumerate}
\item Let $\tau_S=\VaR(S_d)=inf \{y\in \TT: P(S_d\leq y)\geq \alpha\}$. Let $\tau_i=\VaR(R_i)$, $\tau_M=\max_i \tau_i$ and $\tau_m=\min_i \tau_i$. It holds
\begin{equation}
\begin{split}
P(S_d\leq \tau_M)&=\sum_{y\leq t_M}p_S(y)=\sum_{y\leq t_M}\sum_{i=1}^{n_p}\lambda_ip_{R_i}(y)\\
&=\sum_{i=1}^{n_p}\lambda_i\sum_{y\leq t_M}p_{R_i}(y)\geq\sum_{i=1}^{n_p}\lambda_i \alpha=\alpha,
\end{split}
\end{equation}
thus $\tau_S\leq \tau_M$.
It holds
\begin{equation}
\begin{split}
P(S_d\leq \tau_m)&
=\sum_{i=1}^{n_p}\lambda_i\sum_{y\leq t_m}p_{R_i}(y)=\sum_{i=1}^{n_p}\lambda_i \beta_i,
\end{split}
\end{equation}
with $\beta_i\leq \alpha$ therefore we have $P(S_d\leq \tau_m)\leq \alpha$. Thus $\tau_S\geq \tau_m$ and $\tau_m\leq \tau_S\leq \tau_M$.
\item $ES_{\alpha}\geq \tau_m$ and $ES_{\alpha}\leq d$ are trivial.
\end{enumerate}

\end{proof}

The above propositions shows that  $ \VaR$   reaches the maximum and minimum values in $\mathcal{S}_d(p)$ [$\mathcal{S}_d(p, \rho)$]  on the ray densities and
therefore we are able to explicitly find them.

\begin{remark}\label{ESbounds}
The bounds for $ES_{\alpha}$ are weaker and trivial. Nevertheless, at least in some cases, they cannot be improved.  In fact, consider the ray density $\rr=(1-p,\ldots, p)\in \EEE(p)$. If $1-p\leq  \alpha$ then $ES_{\alpha}=d$. As a consequence for marginal default probabilities higher then $1-\alpha$ the bound is reached.

\end{remark}

Thanks to Proposition \ref{binu} that gives the analytical expression of the ray densities of $\SSS(p)$, the following proposition provides the analytical bounds for $\VaR$  in $\SSS(p)$.
\begin{proposition}
Let us consider the class $\SSS(p)$ and 
let $j_1^p=\frac{(p-(1-\alpha))d}{\alpha}$.
\begin{enumerate}
\item  If $j_1^p<0$, $\min \VaR(\Rjj)=0$ and $\max \VaR(\Rjj)=j_2^*$, where $j_2^*$ is the largest integer smaller than $\frac{pd}{ 1-\alpha}$.

\item  If $0\leq j_1^p\leq j_1^M$, $\min  \VaR(\Rjj)=j_1^*$, where $j_1^*$ is the smallest  integer greater or equal to  $j_1^p$  and $\max \VaR(\Rjj)=d$.

\item  If $j_1^p>j_1^M$, $\min \VaR(\Rjj)=j_2^m=j_1^M+1$ and $\max \VaR(\Rjj)=d$.
In this case, if  $pd$ is integer $j_1^M+1=pd$.
\end{enumerate}
\end{proposition}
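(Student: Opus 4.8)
The plan is to compute $\VaR(\Rjj)$ explicitly for every extremal ray of Proposition \ref{binu} and then optimise the resulting expression over the admissible index set $j_1\in\{0,\dots,j_1^M\}$, $j_2\in\{j_2^m,\dots,d\}$. Since $\Rjj$ is the two–point variable taking the value $j_1$ with probability $\pi_1:=\frac{j_2-pd}{j_2-j_1}$ and the value $j_2$ with probability $1-\pi_1$, its distribution function is $0$ below $j_1$, equals $\pi_1$ on $[j_1,j_2)$, and equals $1$ from $j_2$ on. Hence, directly from the definition of $\VaR$,
\begin{equation*}
\VaR(\Rjj)=\begin{cases} j_1,& \pi_1\ge\alpha,\\ j_2,& \pi_1<\alpha.\end{cases}
\end{equation*}
First I would record the two elementary equivalences obtained by clearing denominators (legitimate since $j_2-j_1>0$ and $0<\alpha<1$): for fixed $j_1$, $\pi_1\ge\alpha\iff j_2\ge\frac{pd-\alpha j_1}{1-\alpha}$; and for fixed $j_2$, $\pi_1\ge\alpha\iff j_1\ge\frac{pd-j_2(1-\alpha)}{\alpha}$. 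In particular $\pi_1$ is increasing in each of $j_1$ and $j_2$ (because $j_2-pd>0$ and $pd-j_1>0$ on the admissible set), the monotonicity that reduces the two–dimensional optimisation to boundary rays.

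The two thresholds appearing in the statement are the corner specialisations of these inequalities. Setting $j_2=d$ gives $\pi_1\ge\alpha\iff j_1\ge\frac{(p-(1-\alpha))d}{\alpha}=j_1^p$, so $j_1^p$ decides whether a ray $(j_1,d)$ has $\VaR=j_1$ or $\VaR=d$. Setting $j_1=0$ gives $\pi_1<\alpha\iff j_2<\frac{pd}{1-\alpha}$, so $\frac{pd}{1-\alpha}$ decides whether the ray $(0,j_2)$ has $\VaR=0$ or $\VaR=j_2$. I would then treat the minimum and the maximum separately, using monotonicity to justify that the extremes sit on these corner rays.

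For the minimum, a ray attains $\VaR=j_1$ only when $\pi_1\ge\alpha$, which for fixed $j_1$ is easiest (monotonicity in $j_2$) at $j_2=d$, i.e. when $j_1\ge j_1^p$. If $j_1^p<0$ (case 1) then $(0,d)$ already gives $\VaR=0$, the least possible value. If $0\le j_1^p\le j_1^M$ (case 2) the smallest admissible $j_1$ with $j_1\ge j_1^p$ is $j_1^*=\lceil j_1^p\rceil\le j_1^M$; any ray with smaller $\VaR$ would need $\VaR=j_2\ge j_2^m>j_1^M\ge j_1^*$, so the minimum equals $j_1^*$. If $j_1^p>j_1^M$ (case 3) no admissible $j_1$ reaches $j_1^p$, so every ray has $\VaR=j_2\ge j_2^m$, the value $j_2^m=j_1^M+1$ being attained by $(0,j_2^m)$ and equalling $pd$ when $pd$ is integer. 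For the maximum, $\VaR=j_2$ requires $\pi_1<\alpha$, which for fixed $j_2$ is easiest at $j_1=0$, i.e. $j_2<\frac{pd}{1-\alpha}$: when $j_1^p<0$ (so $\frac{pd}{1-\alpha}<d$) the largest admissible such $j_2$ is $j_2^*$, and since $j_2^*\ge j_2^m>j_1^M$ it dominates every $\VaR=j_1$, giving maximum $j_2^*$; when $j_1^p>0$ the ray $(0,d)$ has $\pi_1=1-p<\alpha$, so $\VaR=d$ is attained and is the global maximum.

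The routine part is the sign–chasing through the two equivalences and the verification that the optimising indices $j_1^*$ and $j_2^*$ actually lie in the admissible ranges. The step needing genuine care — the main obstacle — is the reduction of the joint optimisation over $(j_1,j_2)$ to the corner rays: one must invoke monotonicity of $\pi_1$ in each argument to show that raising $j_2$ to $d$ (on the $\VaR=j_1$ branch) and lowering $j_1$ to $0$ (on the $\VaR=j_2$ branch) can only improve the objective, and then use the ordering $j_1\le j_1^M<pd<j_2^m\le j_2$ to separate the two branches so that no cross–branch ray can beat the claimed optimum. The only genuinely delicate point is the boundary $j_1^p=0$, i.e. $p=1-\alpha$: there the ray $(0,d)$ has $\pi_1=1-p=\alpha$, so the convention $\pi_1\ge\alpha\Rightarrow\VaR=j_1$ (forced by the infimum definition) gives $\VaR=0$ rather than $d$, and indeed $\frac{pd}{1-\alpha}=d$ so the correct maximum there is $j_2^*=d-1$; I would flag this explicitly, since it means the ``$\max\VaR=d$'' conclusion of case 2 holds under the strict reading $j_1^p>0$, the equality $j_1^p=0$ belonging to the case‑1 formula for the maximum.
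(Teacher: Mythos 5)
Your proposal is correct and follows essentially the same route as the paper: you determine $\VaR(\Rjj)$ for each two-point ray via the condition $\rjj(j_1)\geq\alpha$, which gives the same linear threshold $j_2\geq -\frac{\alpha}{1-\alpha}j_1+\frac{pd}{1-\alpha}$ and the same three-case split at the corner point $(j_1^p,d)$; your monotonicity argument in $(j_1,j_2)$ merely makes explicit the reduction to corner rays that the paper carries out geometrically. Your flagging of the boundary $j_1^p=0$ (i.e.\ $p=1-\alpha$) is a genuine refinement rather than a defect: there the ray $R_{(0,d)}$ satisfies $\rr_{(0,d)}(0)=1-p=\alpha$, so its $\VaR$ is $0$ and the maximum over rays is $j_2^*=d-1$ rather than $d$, an edge case that the paper's statement of case 2 and its proof silently overlook.
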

\begin{proof}
Let us consider first the case $pd$ not integer.
The ray densities are given in \eqref{binul} with $0\leq j_1\leq j_1^M$ and $j_1^M+1\leq j_2\leq d$. From the definition of $VaR$ we have
\begin{equation}
\VaR(\Rjj)=j_1 \Longleftrightarrow \rjj(j_1)\geq\alpha.
\end{equation}
It follows
\begin{equation}
\begin{split}
&\rjj(j_1)=\frac{j_2-pd}{j_2-j_1}\geq \alpha,\\
\end{split}
\end{equation}
then
\begin{equation}
\begin{split}
&{j_2}\geq-\frac{ \alpha}{1-\alpha}{j_1}+ \frac{pd}{1-\alpha}.
\end{split}
\end{equation}
We also know that $j_2\leq d$, so let us determine the point $P$ of intersection of ${j_2}=-\frac{ \alpha}{1-\alpha}{j_1}+ \frac{pd}{1-\alpha}$ and $j_2=d$.
The solution of

\begin{equation}
\left\{ \begin{array}{l}
{j_2}=-\frac{ \alpha}{1-\alpha}{j_1}+ \frac{pd}{1-\alpha}\\
j_2=d
\end{array}
\right.
\end{equation}
is $P=(j_1^P, j_2^P)=(\frac{(p-(1-\alpha))d}{\alpha}, d)$.
We distinguish three cases, depending on $j_1^P$.
\begin{enumerate}
\item{$j_1^p<0$}. In this case it will follow that $ \VaR(R_{(0,j_2)})=0$ for all $\frac{pd}{1-\alpha}<j_2\leq d$ and then the minimum value of $ \VaR(\Rjj)=0$. With respect to the maximum value of $ \VaR(\Rjj)$ it will be obtained by $\VaR(R_{(0,j_2^*)})$, where $j_2^*$ is the largest integer smaller than $\frac{pd}{1-\alpha}$.

\item{$0\leq j_1^p\leq j_1^M$.} Let us define $j_1^*$ as the smallest integer greater or equal to $j_1^P$. It follows that $\VaR(R_{(j_1^*,j_2)})=j_1^*$, $j_2^*<j_2\leq d$ with $j_2^*$ is the smallest integer greater or equal to  $-\frac{ \alpha}{1-\alpha}{j_1^*}+ \frac{pd}{1-\alpha}$. Then the minimum value of $VaR_{\alpha}(R_{(j_1,j_2)})=j_1^*$. The maximum value of $ \VaR(R_{(j_1,j_2)})$ is $d$.

\item{$j_1^p>j_1^M$.} In this case $\VaR (R_{(j_1,j_2)})=j_2$. Then the minimum value of  $\VaR(R_{(j_1,j_2)})=j_2^m=j_1^M+1$ and the maximum value of  $ \VaR(R_{(j_1,j_2)})=d$. If $pd$ is integer we also have $j_1^M+1=pd$.
\end{enumerate}
\end{proof}

We can also explicitly find the bounds in $\SSS(p,\rho)$ by searching the maximum e minimum $\VaR$ among the ray densities, whose analytical expression is given in Proposition \ref{trinulexpr}.  The analytical computation of $\VaR$ is out of the aim of the present paper, here we simply  serach for the minimum $\VaR$ and the maximum $\VaR$ among the ray densities.

\section{Model risk analysis\label{MR}}

The theory developed so far allows us to perform model risk analysis.

Consistently with it, let us suppose we have a credit portfolio $P$ with 100
obligors. Let the random vector $\boldsymbol{X}=(X_{1},\ldots ,X_{100})$
collect the default indicators for the portfolio $P$ and assume $\boldsymbol{%
X}\in \mathcal{E}$, where $\mathcal{E}:=\mathcal{E}_{100}$. The variable $S:=S_{100}$ represents the number of
defaults and the distribution of $S$ represents the distribution of the
loss. We analytically find bounds of $ \VaR$ and $\ES$, for
$\alpha =0.90$, $\alpha =0.95$ and $\alpha =0.99$ for two classes of
multivariate exchangeable Bernoulli variables $\mathcal{E}(p)$ and  $\mathcal{E}(p,\rho )$.

The analysis of these two classes of models allows us to study the two
aspects of model risk mentioned in the Introduction, the risk associated to
the pure choice of a "wrong" model (pure model risk) and the one associated
to a "wrong" calibration of the joint model, through default correlation
(calibration risk). In both cases we do not investigate the correctness of
the marginal default probability, which would be the case if we were
investigating marginal model risk.

The bounds of the first class provide an economically sensible measure of
pure joint model risk. To complete the picture,  for any $p$ we provide
the range of admissible correlations for the hundred Bernoulli variables.

The bounds of the second class provide a measure of calibration risk. The
bounds are obtained for a specific correlation coefficient: we perform a
sensitivity analysis of their behavior when $\rho $ changes. \ For each
correlation, we also consider $\VaR$ and $\ES$ associated
to a specific joint model (the Bernoulli mixture one), to show how the
method can be used to assess not calibration risk in general, but the
calibration  risk of a specific model, considering how far its $\VaR
$ and $\ES$ are from the bounds.

In all cases we consider three scenarios corresponding to three marginal
default probabilities $p=0.3\%$, $p=1.7\%$ and $p=26.6\%$, which are the
1-year marginal default probabilities resulting from \cite{SeP}  table 13 page
40,  for the rating
classes $A,BBB$ and $B$.
%
\subsection{Pure model risk\label{ex}}

Here we deal with $\mathcal{E}(p)$ in the three scenarios $%
p=0.3\%,p=1.7\%$ and $p=26.6\%.$  All the results in this section are  analytical.

\subsubsection{Scenario 1: $p=0.3\%$}

Before computing $\VaR$  and $\ES$ for the class $\mathcal{S%
}(0.3\%),$ corresponding to Moody's A rating, let us describe it. The class  has $100$ ray densities that we can find analitically and we found that
all ray densities have different correlations. The bounds for the all moments of the distributions in the class  are reached on the ray densities as proved in \cite{fontana2018representation}.  In this case the bounds for the second order moment and correlation are analytical, as proved in  Section \ref{secondmom}. The moments up to order four and correlation are given in Table \ref{tabella_mom_100_3_997}. Obviously, the
first moment coincides with \thinspace $p$ and its range is a singleton.  Notice that  all positive correlations and  some
negative  are allowed. This is possible since we consider finite sequences of Exchangeable Bernoulli variables and not only the mixing models, i.e. the De Finetti's sequences. So, per se, independently of any model, a hundred
Bernoulli default indicators with equicorrelation cannot span negative
dependence, but are able to span any level of positive dependence and zero
correlation.
\begin{table}[h!]
\centering
{\footnotesize \
\begin{tabular}{lcc}
Order & Min moment & Max moment \\ \hline
1 & 0.003 & 0.003 \\
2 & 0 & 0.003 \\
3 & 0 & 0.003 \\
4 & 0 & 0.003 \\
$\rho$ & -0.003 & 1 \\ \hline
\end{tabular}
}
\caption{Moments $\mathcal{E}(0.3\%)$ class of multivariate Bernoulli }
\label{tabella_mom_100_3_997}
\end{table}

Table  \ref%
{tabella_var_100_3_997} shows the bounds for  the $\VaR$  for the
three levels $\alpha =0.90$, $\alpha =0.95$ and $\alpha =0.99$.

\begin{table}[h!]
\centering
{\footnotesize \
\begin{tabular}{lcc}
Quantile & Min $\VaR$  & Max $\VaR$  \\ \hline
0.9 & 0 & 2 \\
0.95 & 0 & 5 \\
0.99 & 0 & 29 \\ \hline
\end{tabular}
}
\caption{$\VaR$  of the number of defaults for the $\mathcal{E}(0.3\%)$ class of
multivariate Bernoulli }
\label{tabella_var_100_3_997}
\end{table}

Table \ref{tabella_es_100_3_997} shows the bounds for the ES on the ray densities for the
three levels $\alpha =0.90$, $\alpha =0.95$ and $\alpha =0.99$.

\begin{table}[h!]
\centering
{\footnotesize \
\begin{tabular}{lcc}
Quantile & Min ES & Max ES \\ \hline
0.9 & 0.3 & 2 \\
0.95 & 0.3 & 5 \\
0.99 & 0.3 & 29 \\ \hline
\end{tabular}
}
\caption{ES of the number of defaults for the $\mathcal{E}(0.3\%)$ class of
multivariate Bernoulli }
\label{tabella_es_100_3_997}
\end{table}

\subsubsection{ Scenario 2 }

Let us assume $p=1.7\%$, The class $\mathcal{S}(1.7\%)$ has $198$ ray
distributions of $S$ with $198$ different correlations. Table \ref{tabella_mom_100_17_983} provides the bound of the moments also for this class.

\begin{table}[h!]
\centering
{\footnotesize \
\begin{tabular}{lcc}
Order & Min moment & Max moment \\ \hline
1 & 0.017 & 0.017 \\
2 & 0 & 0.017 \\
3 & 0 & 0.017 \\
4 & 0 & 0.017 \\
$\rho$ & -0.009 & 1 \\ \hline
\end{tabular}
}
\caption{Moments $\mathcal{E}(1.7\%)$ class of multivariate Bernoulli }
\label{tabella_mom_100_17_983}
\end{table}
 Table \ref{tabella_var_100_17_983} shows the bounds for  the $\VaR$  for the
three levels $\alpha =0.90$; $\alpha =0.95$ and $\alpha =0.99$.

\begin{table}[h!]
\centering
{\footnotesize \
\begin{tabular}{lcc}
Quantile & Min $\VaR$  & Max $\VaR$  \\ \hline
0.9 & 0 & 16 \\
0.95 & 0 & 33 \\
0.99 & 1 & 100 \\ \hline
\end{tabular}
}
\caption{$\VaR$  of the number of defaults for the $\mathcal{E}(1.7\%)$ class of
multivariate Bernoulli }
\label{tabella_var_100_17_983}
\end{table}

Table \ref{tabella_es_100_17_983} shows  the bounds for the $ES$ on the ray densities for the
three levels $\alpha =0.99$; $\alpha =0.95$ and $\alpha =0.90$. Since $1.7\%\geq 1\%$ we have $\text{ES}_{0.99}=100$, as noticed in Remark \ref{ESbounds}.

\begin{table}[h!]
\centering
{\footnotesize \
\begin{tabular}{lcc}
Quantile & Min ES & Max ES \\ \hline
0.9 & 1.7 & 16 \\
0.95 & 1.7 & 33 \\
0.99 & 1.7 & 100 \\ \hline
\end{tabular}
}
\caption{ES of the number of defaults for the $\mathcal{E}(1.7\%)$ class of
multivariate Bernoulli }
\label{tabella_es_100_17_983}
\end{table}

\subsubsection{Scenario 3}

We consider the class $\mathcal{E}(26.6\%)$. The number of ray densities is
much higher relative to the other two classes considered since it is 1998. Table \ref{tabella_mom_100_133_367} shows that the range of the third and fourth moments of this class is wider that for the other classes.

\begin{table}[h!]
\centering
{\footnotesize \
\begin{tabular}{lcc}
Order & minmom & maxmom \\ \hline
1 & 0.266 & 0.266 \\
2 & 0.069 & 0.266 \\
3 & 0.017 & 0.266 \\
4 & 0.004 & 0.266 \\
$\rho$ & -0.01 & 1 \\ \hline
\end{tabular}
}
\caption{Moments $\mathcal{E}(26.6\%)$ class of multivariate Bernoulli }
\label{tabella_mom_100_133_367}
\end{table}

Table  \ref%
{tabella_var_100_133_367} shows the bounds for  the $\VaR$  for the
three levels $\alpha =0.90$; $\alpha =0.95$ and $\alpha =0.99$.

\begin{table}[h!]
\centering
{\footnotesize \
\begin{tabular}{lcc}
Quantile & Min $\VaR$ & Max $\VaR$  \\ \hline
0.9 & 19 & 100 \\
0.95 & 23 & 100 \\
0.99 & 26 & 100 \\ \hline
\end{tabular}
}
\caption{$\VaR$ of the number of defaults for the $\mathcal{E}(26.6\%)$ class
of multivariate Bernoulli }
\label{tabella_var_100_133_367}
\end{table}
%

The following Table \ref{tabella_es_100_133_367} shows the bounds for the $\ES$ on the ray densities for the
three levels $\alpha =0.90$; $\alpha =0.95$ and $\alpha =0.99$. As one can see the maximum $\ES$ is d=100 for each $\alpha$, in fact $26.6\%\geq 1\%$.
\begin{table}[h!]
\centering
{\footnotesize \
\begin{tabular}{lcc}
Quantile & Min ES & Max ES \\ \hline
0.9 & 26.6 & 100 \\
0.95 & 26.6 & 100 \\
0.99 & 26.6 & 100 \\ \hline
\end{tabular}
}
\caption{ES of the number of defaults for the $\mathcal{E}(26.6\%)$ class of
multivariate Bernoulli }
\label{tabella_es_100_133_367}
\end{table}

\subsubsection{Cross scenario comparisons}

The reader can appreciate how model risk increases, when the marginal
probability does, and  when the risk measure is $\VaR$, looking at Figure \ref{VAR0}. The computation permits to conclude that the VaR range increases with the marginal default  probability, and not only  with the level of confidence (which is the standard result). Also, both the minimum and the maximum are non decreasing with p.

\subsection{Calibration risk}

In this Section we examine the behavior of the loss under the three
scenarios above for the marginal default probability, when, on top of the
marginal, a specific value of the equicorrelation has been selected. We deal
with $\mathcal{E(}p,\rho )$ in the three scenarios $p=0.3\%,p=1.7\%$ and $p=26\%$ and provide bounds for $\VaR$  for three levels of correlation: $\rho=\frac{1}{6}; \frac{1}{2}; \frac{5}{6}$. 
Here, the ray densities are analytical as well as their VaR.  The bounds are found by computationally searching the maximum and minimum VaR   among the ray densities. 

As a benchmark we choose an exchangeable Bernoulli mixing model from the credit risk literature. We estimate the  $\beta$-mixing model of each scenario and compute its $\VaR$.
Let $S_{\beta}$  be the number of default of the $\beta$-mixing models, we have (for a complete overview see \cite{mcneil2005quantitative}):

\begin{equation}
p_{\beta}(j)=\binom{d}{j}\int_0^1p^k(1-p)^{d-k}d\Psi(p), 
\end{equation}
where  $\Psi\sim \beta(a,b)$ the mixing variable. We have 
\begin{equation}
\begin{split}
&p=E[\Psi]\\
&\mu_2=E[\Psi^2].
\end{split}
\end{equation}
Therefore we estimate the $\beta$ parameters $a$ and $b$ by
\begin{equation}
\begin{split}
&p=\frac{a}{a+b}\\
&\mu_2=\frac{a(a+1)}{(a+b)(a+b+1)}.
\end{split}
\end{equation}
Notice that for this model $\rho=0$ is not admissible.
\subsubsection{Scenario 1}

Table 
\ref{tabella_VAR_100_3_997-1_6}
provides the bounds of $\VaR$ when only correlation is known and it
is $\frac{1}{6}$ and the corresponding measures for the $\beta$-mixing
model.
\begin{table}[h!]
\centering
{\footnotesize \
\begin{tabular}{lccc}
Quantile & min$\VaR$  & max $\VaR$  & $\beta$ -$\VaR$  \\ \hline
0.9 & 0 & 2 & 0 \\
0.95 & 0 & 5 & 0 \\
0.99 & 1 & 22 & 9 \\ \hline
\end{tabular}
}
\caption{$\VaR$ of the number of defaults for the $\mathcal{E}(0.3\%, \frac{1}{6%
})$ class of multivariate Bernoulli }
\label{tabella_VAR_100_3_997-1_6}
\end{table}

Table 
\ref{tabella_VAR_100_3_997-1_2}
provides the bounds of  $\VaR$ when only correlation is known and it
is $\frac{1}{2}$ and the corresponding $\VaR$ for the $\beta$-mixing
model.

\begin{table}[h!]
\centering
{\footnotesize \
\begin{tabular}{lccc}
Quantile & min $\VaR$  & max $\VaR$  & $\beta$ -$\VaR$  \\ \hline
0.9 & 0 & 1 & 0 \\
0.95 & 0 & 3 & 0 \\
0.99 & 0 & 21 & 4 \\ \hline
\end{tabular}
}
\caption{$\VaR$  of the number of defaults for the $\mathcal{E}(0.3\%, \frac{1}{2%
})$ class of multivariate Bernoulli }
\label{tabella_VAR_100_3_997-1_2}
\end{table}

Table 
\ref{tabella_VAR_100_3_997-5_6}
provides the bounds of $\VaR$  when correlation is known and it
is $\frac{5}{6}$ and the corresponding measure for the $\beta$-mixing
model.

\begin{table}[h!]
\centering
{\footnotesize \
\begin{tabular}{lccc}
Quantile & min $\VaR$  & max $\VaR$  & $\beta$ -$\VaR$  \\ \hline
0.9 & 0 & 0 & 0 \\
0.95 & 0 & 1 & 0 \\
0.99 & 0 & 7 & 0 \\ \hline
\end{tabular}
}
\caption{$\VaR$  of the number of defaults for the $\mathcal{E}(0.3\%,\frac{5}{6}%
)$ class of multivariate Bernoulli }
\label{tabella_VAR_100_3_997-5_6}
\end{table}

\subsubsection{Scenario 2}

Table 
\ref{tabella_VAR_100_17_983-1_6}
provides the bounds of  $\VaR$  when  correlation is known and it
is $\frac{1}{6}$ and the  the $\beta$-mixing
model $\VaR$.

\begin{table}[h!]
\centering
{\footnotesize \
\begin{tabular}{lccc}
Quantile & min $\VaR$  & max $\VaR$  & $\beta$ -$\VaR$  \\ \hline
0.9 & 0 & 16 & 5 \\
0.95 & 1 & 25 & 11 \\
0.99 & 2 & 55 & 29 \\ \hline
\end{tabular}
}
\caption{$\VaR$  of the number of defaults for the $\mathcal{E}(1.7\%, \frac{1}{6%
})$ class of multivariate Bernoulli }
\label{tabella_VAR_100_17_983-1_6}
\end{table}

Table 
\ref{tabella_VAR_100_17_983-1_2}
provides the bounds of  $\VaR$   when  correlation is known and it
is $\frac{1}{2}$ and the corresponding measure for the $\beta$-mixing
model.
%


\begin{table}[h!]
\centering
{\footnotesize \
\begin{tabular}{lccc}
Quantile & min$\VaR$  & max $\VaR$ & $\beta$ -$\VaR$  \\ \hline
0.9 & 0 & 9 & 0 \\
0.95 & 0 & 25 & 5 \\
0.99 & 1 & 93 & 57 \\ \hline
\end{tabular}
}
\caption{$\VaR$  of the number of defaults for the $\mathcal{E}(1.7\%, \frac{1}{2%
})$ class of multivariate Bernoulli }
\label{tabella_VAR_100_17_983-1_2}
\end{table}

Table 
\ref{tabella_VAR_100_17_983-5_6}
provides the bounds of $\VaR$  when  correlation is known and it
is $\frac{5}{6}$ and the corresponding $\VaR$ for the $\beta$-mixing
model.

\begin{table}[h!]
\centering
{\footnotesize \
\begin{tabular}{lccc}
Quantile & min $\VaR$  & max $\VaR$  & $\beta$ -$\VaR$  \\ \hline
0.9 & 0 & 3 & 0 \\
0.95 & 0 & 8 & 0 \\
0.99 & 61 & 100 & 94 \\ \hline
\end{tabular}
}
\caption{$\VaR$  of the number of defaults for the $\mathcal{E}(1.7\%,\frac{5}{6}%
)$ class of multivariate Bernoulli }
\label{tabella_VAR_100_17_983-5_6}
\end{table}

\subsubsection{Scenario 3}

Table
\ref{tabella_VAR_100_133_367-1_6}
provides the bounds of  $\VaR$  when  correlation is known and it
is $\frac{1}{6}$ and the corresponding measures for the $\beta$-mixing
model.
In this case the number of generators of the class significantly increases.
In fact, the class $\mathcal{E}(26.6\%, \frac{1}{6})$ is generated by 32.372
ray densities.

\begin{table}[h!]
\centering
{\footnotesize
\begin{tabular}{lccc}
Quantile & min $\VaR$  & max $\VaR$  & $\beta$ -$\VaR$  \\ \hline
0.9 & 21 & 82 & 53 \\
0.95 & 26 & 100 & 62 \\
0.99 & 38 & 100 & 76 \\ \hline
\end{tabular}
}
\caption{$\VaR$  of the number of defaults for the $\mathcal{E}(26.6\%, \frac{1}{%
6})$ class of multivariate Bernoulli }
\label{tabella_VAR_100_133_367-1_6}
\end{table}

Table 
\ref{tabella_VAR_100_133_367-1_2}
provide the bounds of $\VaR$ when  correlation is known and it
is $\frac{1}{2}$ and the corresponding measure for the $\beta$-mixing
model.
\begin{table}[h!]
\centering
{\footnotesize \
\begin{tabular}{lccc}
Quantile & min $\VaR$  & max $\VaR$  & $\beta$ -$\VaR$ R \\ \hline
0.9 & 42 & 100 & 82 \\
0.95 & 56 & 100 & 93 \\
0.99 & 63 & 100 & 100 \\ \hline
\end{tabular}
}
\caption{$\VaR$  of the number of defaults for the $\mathcal{E}(26.6\%, \frac{1}{%
2})$ class of multivariate Bernoulli }
\label{tabella_VAR_100_133_367-1_2}
\end{table}

Table 
\ref{tabella_VAR_100_133_367-5_6}
provide the bounds of  $\VaR$  when  correlation is known and it
is $\frac{5}{6}$ and the corresponding measure for the $\beta$-mixing
model.

\begin{table}[h!]
\centering
{\footnotesize \
\begin{tabular}{lccc}
Quantile & min $\VaR$  & max $\VaR$  & $\beta$ -$\VaR$  \\ \hline
0.9 & 81 & 100 & 100 \\
0.95 & 86 & 100 & 100 \\
0.99 & 88 & 100 & 100 \\ \hline
\end{tabular}
}
\caption{$\VaR$  of the number of defaults for the $\mathcal{E}(26.6\%, \frac{5}{%
6})$ class of multivariate Bernoulli }
\label{tabella_VAR_100_133_367-5_6}
\end{table}
\subsubsection{Cross scenario comparisons}
Figures  \ref{VAR1}, \ref{VAR2} and \ref{VAR3} plot the bounds for VaR when  $\rho $ takes equispaced value in the range $[0, \frac{11}{12})$.
The reader can appreciate how calibration risk increases, when the marginal
probability and the correlation does. It also emerges that the $\VaR$ of the $\beta$-mixing model sometimes reaches the bound and depending on $p$ and $\rho$ its values with respect to the bounds  significantly change. In particular for low $p$ the $\VaR$ of the $\beta$-mixing model coincides with the minimum $\VaR$.
The plots show that, even if the $\beta$-mixing model is calibrated to match the moments of the Bernoulli, it tends to produce a VaR close to the minimum one for low $p$, and close to the maximum for high $p$. In any case, the width of the band between the minimum and the maximum, together with the specific location of the $\beta$ VaR within it, give a sense of how wrong the risk appreciation can go, when calibrating a specific correlation, and how stringent is the choice of a specific multivariate distribution within that calibration.

\newpage

\section{Conclusions\label{fine}}

Measuring model risk in credit and default modeling is important, at least
to have a sense of the consequences of mispricing of financial products,
forecasting errors etc. Since, at present, model risk in credit and default
cannot be avoided, we can try to measure it. This paper does exactly that,
in a very general context (exchangeable, equicorrelated Bernoulli), using
two popular risk measures, VaR and ES.
The main contributions are the closed form results for the VaR bounds and the moments of the multivariate distributions, as well as the numerical examples which show how big model risk can be, with a portfolio of 100 obligors with equal exposure, especially when the marginal default probability is high.

\newpage

\bibliographystyle{ieeetr}
\bibliography{biblio}

\begin{figure}[btp]\caption{VAR ranges  for $p=0.03\%; 1.7\%; 26.6\%$ } \label{VAR0}
\centering
\begin{subfigure}[b]{0.3\textwidth}
\includegraphics[width=\textwidth]{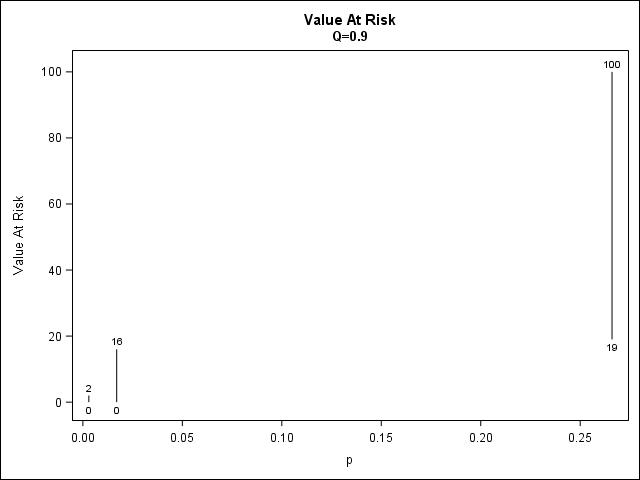}
\end{subfigure}
\begin{subfigure}[b]{0.3\textwidth}
\includegraphics[width=\textwidth]{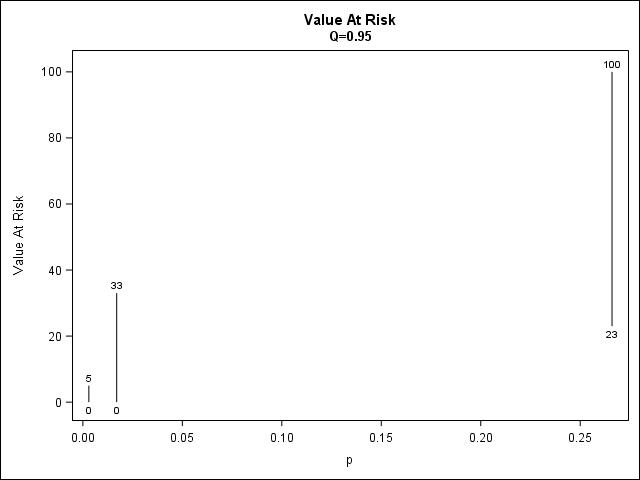}
\end{subfigure}
\begin{subfigure}[b]{0.3\textwidth}
\includegraphics[width=\textwidth]{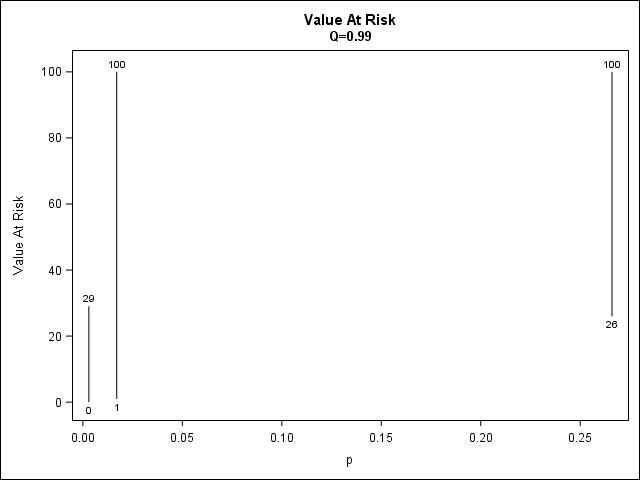}
\end{subfigure}
\end{figure}

\begin{figure}[h!]\caption{VAR  bounds  for $p=0.03\%$ and different $\rho$ and $\beta$-mixing model VAR}\label{VAR1}
\centering
\begin{subfigure}[b]{0.3\textwidth}
\includegraphics[width=\textwidth]{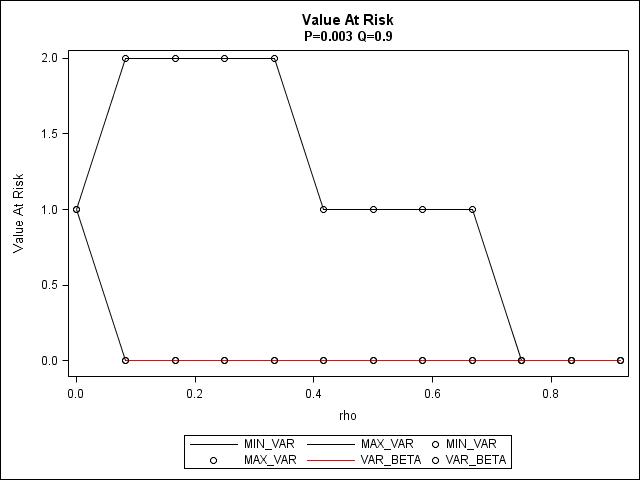}
\end{subfigure}
\begin{subfigure}[b]{0.3\textwidth}
\includegraphics[width=\textwidth]{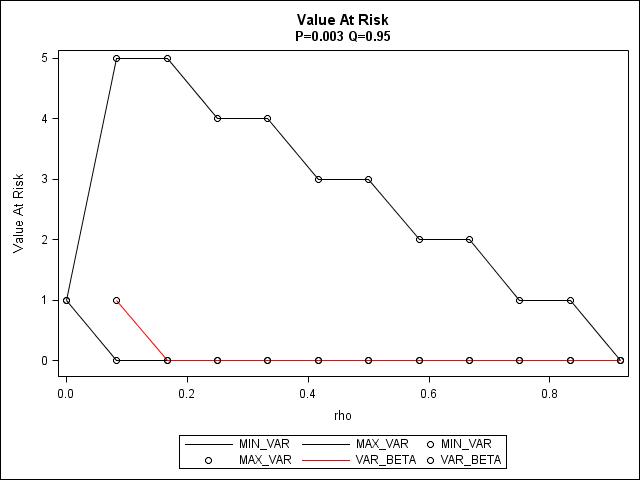}
\end{subfigure}
\begin{subfigure}[b]{0.3\textwidth}
\includegraphics[width=\textwidth]{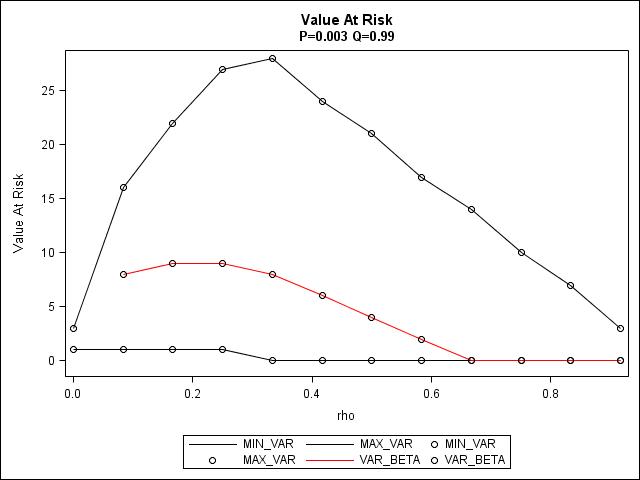}
\end{subfigure}
\end{figure}

\begin{figure}[h!]\caption{VAR  bounds  for $p=1.7\%$ and different $\rho$ and $\beta$-mixing model VAR}\label{VAR2}
\centering
\begin{subfigure}[b]{0.3\textwidth}
\includegraphics[width=\textwidth]{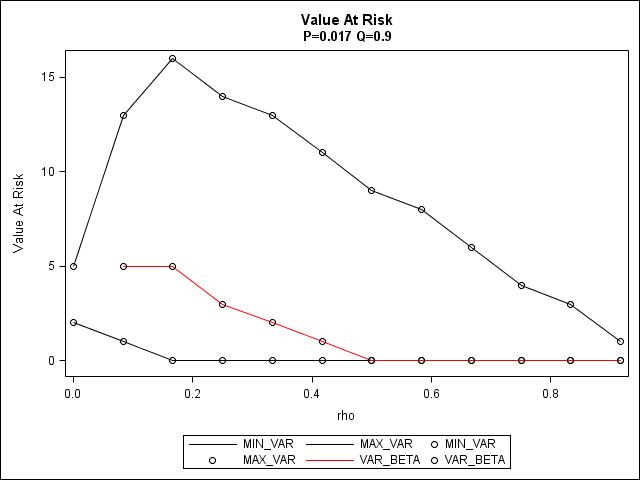}
\end{subfigure}
\begin{subfigure}[b]{0.3\textwidth}
\includegraphics[width=\textwidth]{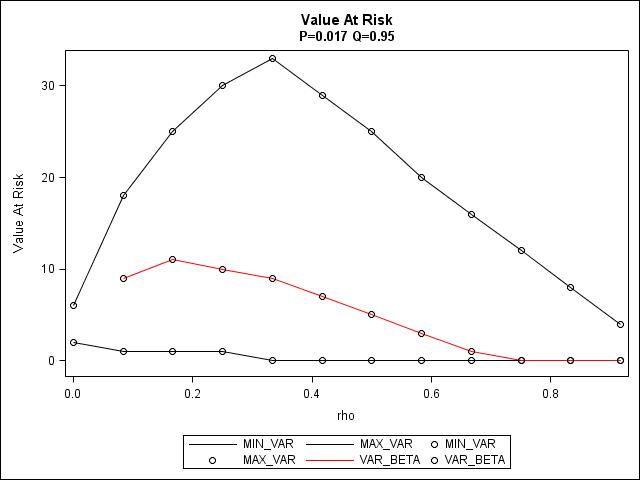}
\end{subfigure}
\begin{subfigure}[b]{0.3\textwidth}
\includegraphics[width=\textwidth]{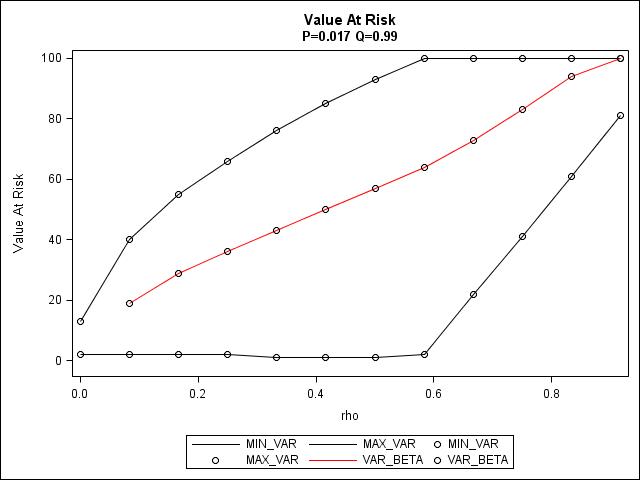}
\end{subfigure}
\end{figure}

\begin{figure}[h!]\caption{VAR  bounds  for $p=26.6\%$ and different $\rho$ and $\beta$-mixing model VAR}\label{VAR3}
\centering
\begin{subfigure}[b]{0.3\textwidth}
\includegraphics[width=\textwidth]{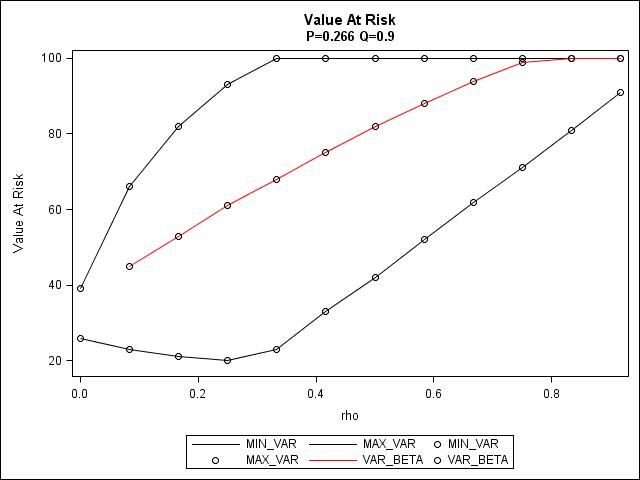}
\end{subfigure}
\begin{subfigure}[b]{0.3\textwidth}
\includegraphics[width=\textwidth]{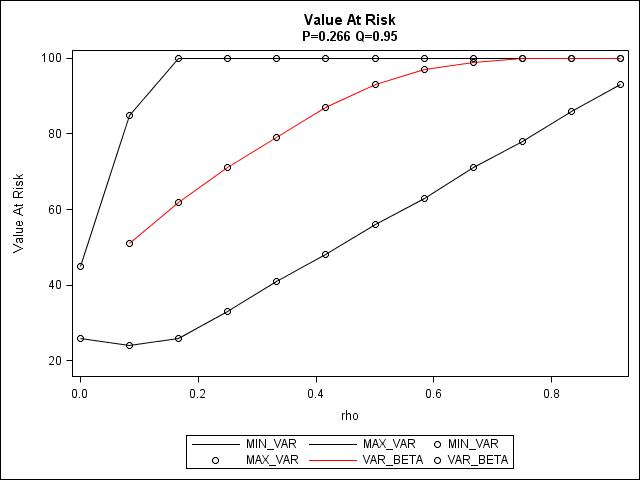}
\end{subfigure}
\begin{subfigure}[b]{0.3\textwidth}
\includegraphics[width=\textwidth]{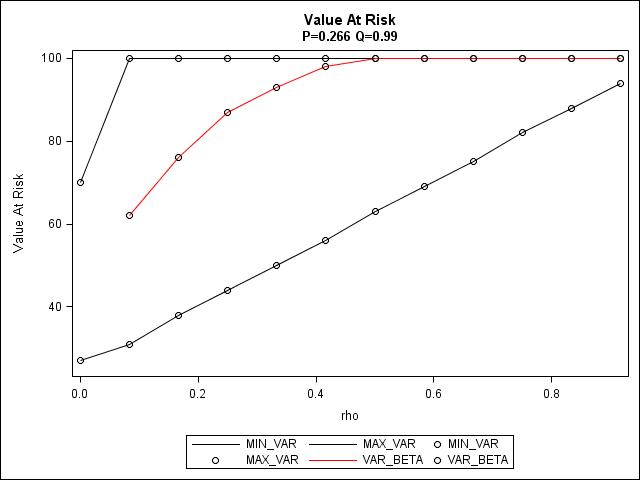}
\end{subfigure}
\end{figure}
\end{document}